\documentclass[12pt]{amsart}
\usepackage[leqno]{amsmath}
\usepackage{amssymb}
\usepackage{amsthm}
\usepackage[mathscr]{eucal}
\usepackage{amscd}
\usepackage{url}
\usepackage{color}
\usepackage{hyperref}
\setlength{\hfuzz}{5pt}

\newtheorem{thm}{Theorem}
\newtheorem{lem}[thm]{Lemma}
\newtheorem{crl}[thm]{Corollary}
\newtheorem{rmk}{Remark}
\newtheorem{nt}{Note}
\newtheorem{Example}{Example}
\numberwithin{equation}{section}
\numberwithin{thm}{section}
\numberwithin{rmk}{section}
\numberwithin{nt}{section}
\numberwithin{Example}{section}

\makeatletter
\let\oldcite\cite
\def\cite#1{\expandafter\def\csname#1\endcsname{}\oldcite{#1}}
\let\oldbibitem\bibitem
\def\bibitem#1{\@ifundefined{#1}{\typeout{#1 nonused}}
{\typeout{#1 used}}\oldbibitem{#1}}
\makeatother

\newcommand{\B}[1]{{\mathbb{#1}}}
\newcommand{\f}[1]{{\boldsymbol{#1}}}

\newcommand{\wha}[1]{\widehat{#1}}
\newcommand{\ba}[1]{{{\bar{#1}}}}
\font\rus=wncyr10 scaled1200

\DeclareMathOperator{\Alt}{{{Alt}}}

\newcommand{\alp}{\alpha}
\newcommand{\gam}{\gamma}
\newcommand{\Gam}{\Gamma}
\newcommand{\del}{\delta}
\newcommand{\sig}{\sigma}
\newcommand{\tht}{\theta}
\newcommand{\lam}{\lambda}
\newcommand{\Lam}{\Lambda}
\newcommand{\ome}{\omega}
\newcommand{\Ome}{\Omega}
\newcommand{\h}{\hbar}


\newcommand{\der}{\partial}
\newcommand{\ten}{\otimes}
\newcommand{\mto}{\mapsto}
\newcommand{\car}{\times}
\newcommand{\wed}{\wedge}
\newcommand{\hto}{\hookrightarrow}
\newcommand{\eqv}{\,\equiv\,}
\newcommand{\con}{\,\lrcorner\,}
\newcommand{\ga}[1]{_0{}^{#1}_0}
\newcommand{\Fla}{^{\flat}{}}
\newcommand{\Sha}{^{\sharp}{}}
\newcommand{\END}{{\,\text{\footnotesize\qedsymbol}}}
\newcommand{\db}[1]{\,{{#1}\!{#1}}\,}
\newcommand{\Rn}{\text{I\!R}}
\newcommand{\byd}{\,{\raisebox{.092ex}{\rm :}{\rm =}}\,}
\newcommand{\M}[1]{{\mathscr{#1}}}
\newcommand{\K}[1]{\text{\rus{#1}}}
\newcommand{\br}[1]{{\breve{#1}}}
\newcommand{\sub}{\subset}
\newcommand{\nab}{\nabla}
\newcommand{\col}[3]{_{#1}{}^{#2}{}_{#3}}
\newcommand{\ssep}[1]{{\qquad\text{\rm{#1}}\qquad}}
\newcommand{\Ga}[2]{_{#1}{}^{#2}_0}
\newcommand{\dt}[1]{{\dot{#1}}}
\newcommand{\1}{\mathbf 1}
\newcommand{\com}{{}\circ{}}
\newcommand{\Ele}{^\mathfrak{e}{}}

\hoffset-1.5cm
\textwidth15cm
\textheight22.8cm

\begin{document}
\title[Hidden symmetries]{\textbf{Hidden symmetries of the gravitational contact structure
    of the classical  phase space
    of general relativistic test particle}}
\author[J. Jany\v ska]{Josef Jany\v ska}
\thanks{Supported
by the  grant GA \v CR 14--02476S.}

\address{
{\ }
\newline
Department of Mathematics and Statistics, Masaryk University
\newline
Kotl\'a\v rsk\'a 2, 611 37 Brno, Czech Republic
\newline
e-mail: janyska@math.muni.cz}

\keywords{
phase space,
gravitational contact phase structure,
gravitational Jacobi phase structure,
infinitesimal symmetry,
hidden symmetry,
Killing multivector field}

\subjclass{
70H40, 
70H45, 
70H33, 
70G45, 
58A20. 
}

\begin{abstract}
The phase space of general relativistic test particle is defined as the 
1-jet space of motions. A Lorentzian metric defines  the canonical contact structure on the odd-dimensional phase space. In the paper we study infinitesimal symmetries of the gravitational contact phase structure which are not generated by spacetime infinitesimal symmetries, i.e. they are hidden symmetries. We prove that Killing multivector fields admit hidden symmetries of the gravitational    contact phase structure and we give the explicit description of such hidden symmetries.
\end{abstract}

\maketitle

\section{Introduction}
\setcounter{equation}{0}
A \emph{classical spacetime} is assumed to be an oriented and time oriented 4--dimensional
manifold equipped with a scaled Lorentzian metric.
In classical general relativity the phase space is usually defined either as the cotangent bundle with the canonical symplectic structure or as a part of the unit pseudosphere bundle given by time-like future oriented vectors. In the second case the phase space is also called the observer space and the metric defines the canonical contact phase structure. 

Starting from the papers \cite{JanMod95, Jan14, JanMod08, ManVit04, Vit00} the phase space can be defined as the 1-jet space of motions. In this case we can use the general theory of jets of submanifolds \cite{Vin88} which allows as to define geometrical structures given naturally by the metric and an electromagnetic fields. Namely, the metric field admits the gravitational contact phase structure and the metric and the electromagnetic fields admit the almost-cosymplectic-contact structure \cite{JanMod09}.

In the case of the cotangent bundle a phase infinitesimal symmetry is assumed to be an infinitesimal symmetry of the kinetic energy function. It is very well known, \cite{Cra84, Som73}, that such symmetries are given as the Hamiltonian lift (with respect to the canonical symplectic 2--form) of functions constant of motions.
Functions constant of motions which are polynomial on fibres of the cotangent bundle are given by Killing $ k $--vector fields, $ k\ge 1 $. For $ k= 1 $ the corresponding infinitesimal symmetries are the flow lifts of Killing vector fields and they are projectable on infinitesimal symmetries of the spacetime. For $ k\ge 2 $ the corresponding infinitesimal symmetries are not projectable and they are called {\it hidden symmetries}.

In the case of odd--dimensional phase space (the observer space or the 1--jet space of motions) the metric defines on the phase space a contact structure and a phase infinitesimal symmetry is assumed to be an infinitesimal symmetry of the contact 1--form. Phase infinitesimal symmetries which are projectable on infinitesimal symmetries of the spacetime  were studied on the observer space by Iwai \cite{Iwa76} and on 1--jet space of motions by Jany\v ska and Vitolo \cite{JanVit12}. In both situations projectable symmetries are given by flow lifts of Killing vector fields.
In this paper we describe hidden (nonprojectable) infinitesimal symmetries for the phase space given as the 1-jet space of motions.
It is proved that hidden symmetries are given by the Hamilton--Jacobi lifts of conserved phase functions and we give explicit construction   of hidden symmetries generated by Killing multivector fields. 

\smallskip
 Our theory is explicitly independent of scales, so we
introduce the
spaces of scales
in the sense of \cite{JanModVit10}.
 Any tensor field carries explicit information on its
scale dimension.
We assume the following basic spaces of scales:
the space of 
\emph{time intervals}
$\B T\,$,
the space of 
\emph{lengths}
$\B L $ and
the space of 
\emph{mass} 
$\B M$.
We assume  the 
\emph{speed of light}
$c \in \B T^{*} \ten \B L$ and the 
\emph{Planck constant}
$\hbar\in \B T^*\ten \B L^2\ten \B M$ as the 
\emph{universal scales}.

\section{Preliminaries}
\setcounter{equation}{0}

\subsection{Schouten bracket}
In 1940 Schouten \cite{Sch40} introduced the differential invariant (Schouten bracket) of two contravariant tensor fields (multivector fields). 
 We recall basic facts about the Schouten bracket of skew symmetric and symmetric multivector fields.

Let $ \f M $ be an $ n $-dimensional differentiable manifold and let us denote by $ (x^{\lambda}) $ local coordinates on $ \f M $.
The induced fibred coordinates on $T\f M  $ and $T^{*}\f M  $ will be denoted by $ (x^{\lambda};\overset{\centerdot}{x}{}^{\lambda} ) $ and $ (x^{\lambda};{\overset{\centerdot{}}{x}}{}_{\lambda} ) $, respectively.

Let us recall the expression of the Schouten bracket $ [P,Q] $
 of a skew symmetric 
$p$-vector $P$ and a skew symmetric $q$-vector $Q$ \cite{LibMar87, Vai94}
\begin{equation*}
i_{[P,Q]} \beta = 
(-1)^{pq+q} \, i_P di_Q \beta + (-1)^p \, i_Q di_P \beta \,,
\end{equation*}
for each closed form
$\beta$
of degree
$p+q-1 \,.$ 
Such bracket satisfy the graded antisymmetry and the graded Jacobi identity, so the sheaf of sections $\Gamma(\bigwedge T\f M) = \bigoplus_k \Gamma(\bigwedge^{k}T\f M) $ is a graded Lie algebra.

According to \cite{MicDub95, Woo75} we can define the Schouten bracket for a symmetric $k$-vector field $\overset{k}{K}$ and a symmetric $l$-vector field $\overset{l}{L}$ as the $(k+l-1)$-vector field $[\overset{k}{K},\overset{l}{L}]$ defined for decomposable fields $\overset{k}{K}=X_1\vee\dots \vee X_k$ and  $\overset{l}{L}=Y_1\vee\dots \vee Y_l$ as  
$$ 
[X_1\vee\dots \vee X_k, Y_1\vee\dots \vee Y_l] 
= \sum_{i,j}
[X_i, Y_j ] 
\vee X_{1}\vee\dots \wha{X}_i \dots \vee X_k\vee Y_1 \vee\dots \wha{Y}_j \dots\vee Y_l\,.
$$
The Schouten bracket of symmetric multivector fields is antisymmetric and  satisfy the Jacobi identity,
so $\Gamma(ST\f M) = \bigoplus_k \Gamma(S^{k}T\f M) $ is a Lie algebra with grading but not a graded Lie algebra.

\smallskip
Any symmetric $k$-vector field $\overset{k}{K} = \overset{k}{K}{}^{\lam_{1}\dots\lam_{k}}\,\der_{\lambda_{1}}\ten\dots\ten \der_{\lambda_{k}}$, $\overset{k}{K}{}^{\lam_{1}\dots\lam_{k}}= \overset{k}{K}{}^{\lam_{\sigma(1)}\dots\lam_{\sigma(k)}}  $ for any permutation  of indices $ \sigma $,  defines a function on $T^*\f M$ polynomial and homogeneous of degree $k$ on fibres. 
So we have the mapping
$
\pi^*:\Gamma(S^kT\f M)\to C^\infty(T^*\f M)
$
given in coordinates by
$$
\pi^*(\overset{k}{K}) = \overset{k}{K}{}^{\lam_{1}\dots\lam_{k}}\, {\overset{\centerdot{}}{x}}{}_{\lambda_1}\dots {\overset{\centerdot{}}{x}}{}_{\lambda_k}\,.
$$
$\pi^*$ is a homomorphism of Lie algebras, where on 
$C^\infty(T^*\f M)$ we consider the canonical Poisson bracket $\lbrace , \rbrace$ induced by the canonical symplectic 2-form $ \omega = d{\overset{\centerdot{}}{x}}{}_{\lambda} \wedge dx^\lambda $, \cite{MicDub95}.
I.e. 
$
\pi^*([\overset{k}{K},\overset{l}{L}]) = \lbrace\pi^*(\overset{k}{K}),\pi^*(\overset{l}{L})\rbrace
$
which gives the following coordinate expression, 
for a symmetric $k$-vector field $\overset{k}{K}$ and a symmetric  $l$-vector field $\overset{l}{L}$, 
\begin{align}\label{Eq: 2.1}
[\overset{k}{K},\overset{l}{L}] 
& = 
\frac{1}{(k+l-1)!} \sum_\sigma \big(
k\, \overset{k}{K}{}^{\rho\lam_{\sigma(1)}\dots\lam_{\sigma(k-1)}}\,\der_\rho \overset{l}{L}{}^{\lam_{\sigma(k)}\dots\lam_{\sigma(k+l-1)}}
\\
&\quad\nonumber
- l\, \overset{l}{L}{}^{\rho\lam_{\sigma(1)}\dots\lam_{\sigma(l-1)}}\,\der_\rho \overset{k}{K}{}^{\lam_{\sigma(l)}\dots\lam_{\sigma(k+l-1)}}
\big)\, \der_{\lambda_1}\ten \dots \ten \der_{\lambda_{k+l-1}}\,,
\end{align}
where $\sigma$ runs all permutations of indices $1, \dots , k+l-1$.

\subsection{Killing tensor fields}

We recall basic facts about Killing tensor fields on a Riemannian or a pseudo-Riemannian manifold $ (\f M,g) $. In what follows we shall denote by $\ba g$ the contravariant metric.

\smallskip
A {\it Killing $ (0,k) $-tensor field} is a symmetric $ (0,k) $-tensor field $K=K_{\lambda_{1}\dots\lambda_{k}} \, d^{\lambda_{1}}\ten\dots\ten d^{\lambda_{k}} $ such that the function $ K_{\lambda_{1}\dots\lambda_{k}} \, \overset{\centerdot}{x}{}^{\lambda_{1}}\dots \overset{\centerdot}{x}{}^{\lambda_{k}}$ is constant on geodesic curves of the Levi Civita connection. This condition is equivalent with
\begin{equation*}
\nabla_{(\lambda_{1}}K_{\lambda_{2} \dots \lambda_{k+1})}  = 0\,.
\end{equation*}

Equivalently we can define a {\it Killing $ (k,0) $-tensor field} ($ k $-vector field) as  a symmetric $ (k,0) $-tensor field $\overset{k}{K}=\overset{k}{K}{}^{\lambda_{1}\dots\lambda_{k}} \, \der_{\lambda_{1}}\ten\dots\ten \der_{\lambda_{k}} $ such that 
\begin{equation*}
g^{\rho(\lambda_{1}}\nabla_{\rho}\overset{k}{K}{}^{\lambda_{2}\dots\lambda_{k+1})} =
\nabla^{(\lambda_{1}}\overset{k}{K}{}^{\lambda_{2}\dots\lambda_{k+1})} = 0\,.
\end{equation*}

It is easy to see that  $ K $ is a Killing $ (0,k) $-tensor field if and only if $ K\Sha = (g\Sha\ten\dots\ten g\Sha) K $ is a Killing $ (k,0) $-tensor field.

\begin{rmk}
{\rm 
We have the canonical Killing (0,2) and (2,0) tensor fields given by  $g$ and $\bar g$, respectively.
}
\hfill\END
\end{rmk}

Killing multivector fields can be equivalently defined as
symmetric multivector fields satisfying the {\it Killing tensor equation}
\begin{equation}\label{Eq: 2.2}
[\overset{k}{K},\ba g] = 0\,,
\end{equation}
 \cite{Woo75}.
Then from the Jacobi identity we get that the sheaf of Killing multivector fields is closed with respect to the Schouten bracket, i.e. if $\overset{k}{K},\overset{l}{L}$ are Killing multivector fields then $[\overset{k}{K},\overset{l}{L}]$ is also a Killing multivector field.

\begin{rmk}
{\rm
In \cite{Som73} it was proved that for a symmetric $k$--vector field $\overset{k}{K}$ the function $\pi^*(\overset{k}{K})$ is constant of motion if and only if $\overset{k}{K}$ is  Killing. On the other hand it is equivalent with the fact that the Hamiltonian lift of $\pi^*(\overset{k}{K})$ given by the canonical symplectic 2--form is an infinitesimal symmetry of the kinetic energy function $ \tfrac 12 \bar g^{\lam\mu}\,\dot x_\lambda\, \dot x_\mu $.  
}
\hfill\END
\end{rmk}

\subsection{Structures of odd dimensional manifolds}
Let $\f M$ be a $(2n+1)$-dimensional manifold.

A
\emph{pre cosymplectic (regular) structure (pair)} 
on $\f M$ is given by a 1-form $\ome$ and a 2-form $\Ome$
such that
$
\ome\wed \Ome^n \not\equiv 0\,.
$
A \emph{contravariant (regular) structure (pair)} $(E,\Lam)$ is given by
a vector field $E$ and an antisymmetric 2-vector $\Lam$ such that
$
E\wed \Lam^n \not\equiv 0\,.
$
We denote by
$
\Ome\Fla:T\f M\to T^*\f M\,
$
and
$
\Lam\Sha:T^*\f M\to T\f M\,
$
the corresponding "musical" morphisms.

By \cite{Lich78}
if $(\ome,\Ome)$ is a pre cosymplectic pair then
there exists a unique regular pair
$(E,\Lam)$ such that
\begin{equation}\label{Eq: 2.3}
(\Ome\Fla_{|\text{im}\, \Lam\Sha})^{-1}
= \Lam\Sha_{|\text{im}\, \Ome\Fla} \,,
\quad i_E\ome =1\,,
\quad i_E\Ome =0\,,
\quad i_\ome\Lam = 0\,.
\end{equation}
On the other hand for any regular pair $(E,\Lam)$
there exists a unique (regular) pair $(\ome,\Ome)$
satisfying the above identities.
The pairs $(\ome,\Ome)$ and
$(E,\Lam)$ satisfying the above identities are said to be mutually 
\emph{dual}.
The vector field $E$ is usually called
the 
\emph{Reeb vector field}
of the pair $(\ome,\Ome)$. In fact
geometrical structures given by dual pairs coincide. 

An 
\emph{almost-cosymplectic-contact structure (pair)}
\cite{JanMod09}
is
given by
a pair 
$(\ome,\Ome)$
such that
$
d\Ome = 0 \,,
$
$
\ome \wed \Ome^n \not\equiv 0 \,.
$
The dual 
\emph{almost-coPoisson-Jacobi structure (pair)\/} 
is given by the
pair $(E,\Lam)$
such that
$
[E,\Lam] = - E\wed \Lam\Sha(L_E\ome)\,,
$
$
[\Lam,\Lam] = 2 \, E\wed (\Lam\Sha\ten\Lam\Sha)(d\ome)\,,
$
where $\ome$ is the 
\emph{fundamental 1-form\/}
satisfying $i_E\ome=1\,,\,\, i_\ome\Lam =0$.
Here $[,]$ is the Schouten bracket of skew symmetric multivector fields.

\begin{rmk}\label{Rm: 2.2}
{\rm
An almost-cosypmlectic-contact pair generalizes standard cosymplectic and contact pairs.
Really, if $d\ome =0$ we obtain a cosymplectic pair \cite{deLTuy96}.
The corresponding dual pair is \emph{coPoisson pair} \cite{JanMod09}
given by the pair
$(E, \Lam)$
such that
$
[E, \Lam] = 0 \,,
$
$
[\Lam, \Lam] = 0 \,.
$
A 
{contact structure (pair)}
is given by a pair 
$(\ome,\Ome)$
such that
$
\Ome = d\ome \,,
$
$
\ome \wed \Ome^n \not\equiv 0 \,.
$
The dual
{Jacobi structure (pair)}
is given by the pair
$(E, \Lam)$
such that
$
[E, \Lam] = 0 \,,
$
$
[\Lam, \Lam] = - 2 E \wed \Lam \,.
$
}
\hfill\END
\end{rmk}

\subsection{Infinitesimal symmetries of almost-cosymplectic-contact structures}

Let $(\ome,\Ome)$ and $(E,\Lambda)$ be mutually dual regular structures on an odd dimensional manifold $\f M$. An \emph{infinitesimal symmetry}  of the structure $(\ome,\Ome)$ is a vector
field $ X $ on $\f M $ such that $L_X\omega = 0$, $L_X\Omega = 0$.
Similarly, an \emph{infinitesimal symmetry}  of the structure $(E,\Lambda)$ is
a vector field $X$ on $\f M$ such that
$L_X E = [X,E] = 0$, $L_X\Lambda = [X,\Lambda] =  0$.

\begin{lem}\label{Lm: 2.1} {\rm (\cite{JanVit12})}
Let $X$ be a vector field on $\f M$. The following conditions
are equivalent:

1. $L_X\omega = 0$ and $L_X\Ome = 0$.

2. $L_X E= [X,E]=0$ and
$L_X\Lambda= [X,\Lambda]=0$.
\hfill\END
\end{lem}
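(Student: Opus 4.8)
The plan is to exploit the fact that, by the uniqueness in the dual-pair correspondence of \cite{Lich78} recalled before \eqref{Eq: 2.3}, each member of a dual pair is completely determined by the other through the algebraic-differential relations $i_E\ome = 1$, $i_E\Ome = 0$, $i_\ome\Lam = 0$ together with the inversion property of the musical morphisms $\Ome\Fla$ and $\Lam\Sha$. Since $L_X$ is a derivation that commutes with $d$ and with all natural contractions (in particular $L_X\,i_Y\beta = i_{[X,Y]}\beta + i_Y\,L_X\beta$ on forms, and the analogous Leibniz rule for the contraction $i_\ome\Lam$ of a form into a bivector), applying $L_X$ to these defining relations will transport a symmetry of one pair into a symmetry of the other. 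Regularity enters only at the end, to convert the resulting linear conditions into genuine vanishing statements.

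For the implication $1\Rightarrow2$ I would assume $L_X\ome = 0$ and $L_X\Ome = 0$. Differentiating $i_E\ome = 1$ and $i_E\Ome = 0$ gives $i_{L_XE}\ome = 0$ and $i_{L_XE}\Ome = 0$; since $\ome\wed\Ome^n\neq0$ forces $\Ker\Ome\Fla = \langle E\rangle$, and $i_E\ome = 1$, these two conditions force $L_XE = 0$. For the bivector I would use the operator identity
\[
\Lam\Sha\com\Ome\Fla = \mathrm{id}_{T\f M} - E\ten\ome,
\]
which is an immediate consequence of \eqref{Eq: 2.3}. Applying $L_X$, and using $L_X\Ome\Fla = 0$ (from $L_X\Ome=0$) together with $L_X(E\ten\ome)=0$ (from $L_XE=0=L_X\ome$), yields $(L_X\Lam\Sha)\com\Ome\Fla = 0$, so $L_X\Lam\Sha$ annihilates $\mathrm{im}\,\Ome\Fla = \{\alpha : i_E\alpha = 0\}$. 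Differentiating $i_\ome\Lam = 0$ separately gives $(L_X\Lam\Sha)(\ome)=0$. As $T^*\f M = \mathrm{im}\,\Ome\Fla\oplus\langle\ome\rangle$ (because $i_E\ome\neq0$), I conclude $L_X\Lam\Sha=0$, i.e. $L_X\Lam=0$.

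The reverse implication $2\Rightarrow1$ is completely symmetric under the duality $(\ome,\Ome)\leftrightarrow(E,\Lam)$: assuming $L_XE=0$ and $L_X\Lam=0$, I would differentiate the same relations to obtain $i_E(L_X\ome)=0$ and $\Lam\Sha(L_X\ome)=0$; the rank condition $E\wed\Lam^n\neq0$ gives $\Ker\Lam\Sha=\langle\ome\rangle$, whence $L_X\ome=0$, and then the dual identity $\Ome\Fla\com\Lam\Sha = \mathrm{id}_{T^*\f M}-\ome\ten E$ yields $L_X\Ome=0$ by the same splitting argument applied to $T\f M = \Ker\ome\oplus\langle E\rangle$. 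The only substantive work is establishing the two operator identities and the direct-sum decompositions from regularity; once those are in hand the Lie-derivative computations are routine. The main obstacle I anticipate is keeping the sign and placement conventions of the musical morphisms consistent, so that the inversion relation in \eqref{Eq: 2.3} is invoked on exactly the right subspaces and the two identities above carry the correct signs.
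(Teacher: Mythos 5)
Your argument is correct: the operator identities $\Lam\Sha\com\Ome\Fla=\mathrm{id}-E\ten\ome$ and $\Ome\Fla\com\Lam\Sha=\mathrm{id}-\ome\ten E$ do follow from \eqref{Eq: 2.3}, the regularity conditions give exactly the one-dimensional kernels $\Ker\Ome\Fla=\langle E\rangle$, $\Ker\Lam\Sha=\langle\ome\rangle$ and the splittings $T\f M=\mathrm{im}\,\Lam\Sha\oplus\langle E\rangle$, $T^*\f M=\mathrm{im}\,\Ome\Fla\oplus\langle\ome\rangle$ that you invoke, and applying $L_X$ to the defining relations then closes both implications. Note that the paper itself gives no proof of this lemma --- it is quoted from \cite{JanVit12} --- so there is nothing internal to compare against; your duality-and-splitting argument is a clean, self-contained proof of the cited statement.
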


\begin{thm}\label{Th: 2.2} {\rm (\cite{Jan11})}
{\it
A vector field $ X $ is an infinitesimal symmetry of the almost-cosymplectic-contact structure $(\ome,\Ome)$
if and only if
it is of local type
$
X = df\Sha + h\, E\,,
$
where $ f,h \in C^{\infty}(\f M) $ such that $ E.f = 0 $ and
\begin{equation}\label{Eq: 2.4}
i_{df\Sha}d\omega + h\, i_E\, d\omega + dh = 0\,.
\end{equation}
}
\end{thm}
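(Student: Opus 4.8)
The plan is to translate both symmetry conditions $L_X\ome = 0$ and $L_X\Ome = 0$ into algebraic conditions by means of Cartan's formula $L_X = i_X\,d + d\,i_X$, taking advantage of the defining property $d\Ome = 0$ of an almost-cosymplectic-contact pair. Setting $h \byd i_X\ome$, the second condition becomes $L_X\Ome = i_X\,d\Ome + d(i_X\Ome) = d(i_X\Ome) = 0$, so that the $1$-form $i_X\Ome = \Ome\Fla(X)$ is closed. By the Poincar\'e lemma it is then locally exact, $i_X\Ome = df$ for some local potential $f$; this is precisely the origin of the word ``local'' in the statement.

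Before proceeding I would record the structural consequences of \eqref{Eq: 2.3}. Since $i_E\Ome = 0$ and $\ome\wed\Ome^n\not\equiv 0$, the morphism $\Ome\Fla$ has kernel spanned by $E$ and image equal to the annihilator of $E$; dually $\Ker\Lam\Sha$ is spanned by $\ome$ and $\text{im}\,\Lam\Sha = \Ker\ome$. The inverse relation in \eqref{Eq: 2.3} moreover gives $\Lam\Sha\com\Ome\Fla = \text{id}$ on $\text{im}\,\Lam\Sha$ and $\Ome\Fla\com\Lam\Sha = \text{id}$ on $\text{im}\,\Ome\Fla$. Because $\Ome\Fla(X) = df$ lies in the image of $\Ome\Fla$, i.e. in the annihilator of $E$, one automatically obtains $E.f = i_E\,df = 0$; thus in the forward direction this is a consequence, not an extra hypothesis.

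To recover the asserted form of $X$ I would split $X = X_0 + h\,E$ with $X_0 \byd X - h\,E$. Then $i_{X_0}\ome = i_X\ome - h = 0$ forces $X_0 \in \Ker\ome = \text{im}\,\Lam\Sha$, while $\Ome\Fla(E) = 0$ gives $\Ome\Fla(X_0) = \Ome\Fla(X) = df$. Applying $\Lam\Sha$ and using $\Lam\Sha\com\Ome\Fla = \text{id}$ on $\text{im}\,\Lam\Sha$ yields $\Lam\Sha(df) = X_0$, that is $X = df\Sha + h\,E$. The first condition then unwinds by Cartan's formula to $L_X\ome = i_X\,d\ome + dh = i_{df\Sha}\,d\ome + h\,i_E\,d\ome + dh$, which vanishes exactly when \eqref{Eq: 2.4} holds. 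For the converse I would reverse the computation: given $X = df\Sha + h\,E$ with $E.f = 0$, the hypothesis $E.f = 0$ places $df$ in $\text{im}\,\Ome\Fla$, so $i_X\Ome = \Ome\Fla(\Lam\Sha(df)) = df$ is exact and $L_X\Ome = d(df) = 0$, while \eqref{Eq: 2.4} yields $L_X\ome = 0$.

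The step I expect to be the main obstacle is the careful bookkeeping with the two musical morphisms: one must check that $df$ actually lies in $\text{im}\,\Ome\Fla$ (equivalently $E.f = 0$) before the inverse relation of \eqref{Eq: 2.3} can legitimately be applied, and must keep track of the one-dimensional kernel $\langle E\rangle$ of $\Ome\Fla$ so that the complement $X_0 = X - (i_X\ome)\,E$ is the right one. Once these identifications are secured the two symmetry conditions decouple cleanly: $L_X\Ome = 0$ pins down the $df\Sha$ part through closedness of $i_X\Ome$, and $L_X\ome = 0$ collapses to the single scalar equation \eqref{Eq: 2.4} linking $f$ and $h$.
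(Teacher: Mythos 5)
Your proof is correct. Note that the paper does not actually prove Theorem \ref{Th: 2.2} (it is quoted from \cite{Jan11}), so there is no internal argument to compare against; your route---using $d\Ome=0$ to make $i_X\Ome$ closed and hence locally exact, then inverting $\Ome\Fla$ on $\ker\ome=\text{im}\,\Lam\Sha$ via \eqref{Eq: 2.3}---is precisely the mechanism the paper relies on elsewhere (e.g.\ in Corollary \ref{Cr: 2.3} and Lemma \ref{Lm: 2.4}). You also correctly isolate the one delicate point: $E.f=0$ is automatic in the forward direction from $i_E\Ome=0$, but must be imposed in the converse so that $df\in\text{im}\,\Ome\Fla$ and hence $\Ome\Fla(df\Sha)=df$.
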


\begin{crl}\label{Cr: 2.3}
1. An infinitesimal symmetry
of the cosymplectic  structure $(\omega,\Omega)$ is of local type
$
Y = df\Sha + h\, E\,,
$
where $ f \in C^\infty(\f M) $ such that $ E.f = 0 $ and $h$ is a constant. 

2. Any infinitesimal symmetry
of the contact  structure $(\omega,\Omega)$ is
 of local type
\begin{equation}\label{Eq: 2.5}
X =  df\Sha - f\,E\,,
\end{equation} 
where $ f \in C^\infty(\f M) $ such that $ E.f = 0 $.
\end{crl}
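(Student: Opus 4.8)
The plan is to derive both statements from Theorem~\ref{Th: 2.2} by specializing its characterization to the two structures singled out in Remark~\ref{Rm: 2.2}. In each case an infinitesimal symmetry has the local form $X = df\Sha + h\,E$ with $E.f = 0$, and the only task is to determine what the constraint \eqref{Eq: 2.4} forces on the pair $(f,h)$ once the defining identity of the structure ($d\ome = 0$ in the cosymplectic case, $\Ome = d\ome$ in the contact case) is inserted.

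For part 1 the cosymplectic condition is $d\ome = 0$. Substituting this into \eqref{Eq: 2.4} annihilates the first two terms and leaves $dh = 0$, so $h$ is locally constant while $f$ remains free subject to $E.f = 0$. This is precisely the asserted form $Y = df\Sha + h\,E$ with $h$ a constant, and since Theorem~\ref{Th: 2.2} is an equivalence, no separate converse is needed.

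For part 2 the contact condition is $\Ome = d\ome$. First I would use the duality relations \eqref{Eq: 2.3}: since $i_E\Ome = 0$ we have $i_E\,d\ome = 0$, which kills the middle term $h\,i_E\,d\ome$ of \eqref{Eq: 2.4}. The heart of the argument is then to evaluate $i_{df\Sha}\,d\ome = i_{\Lam\Sha(df)}\,\Ome$. Because $\ome\wed\Ome^n\neq 0$, the $2$-form $\Ome = d\ome$ has one-dimensional kernel spanned by $E$; moreover $i_E\,i_X\Ome = \Ome(X,E) = 0$ for every $X$, so $\text{im}\,\Ome\Fla$ is exactly the rank-$2n$ subbundle $\{\beta : i_E\beta = 0\}$. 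The hypothesis $E.f = 0$ says $i_E\,df = 0$, hence $df \in \text{im}\,\Ome\Fla$, and the inversion relation $(\Ome\Fla_{|\text{im}\,\Lam\Sha})^{-1} = \Lam\Sha_{|\text{im}\,\Ome\Fla}$ then yields $\Ome\Fla(\Lam\Sha(df)) = df$, i.e. $i_{df\Sha}\,d\ome = df$. Feeding this back into \eqref{Eq: 2.4} gives $df + dh = 0$, so $d(f+h) = 0$ and $h = -f$ up to a local additive constant.

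It remains to dispose of that integration constant. Replacing $f$ by $f$ minus the constant leaves $df\Sha$ unchanged and preserves $E.f = 0$, while turning $h$ into exactly $-f$; this produces the claimed normal form \eqref{Eq: 2.5}, $X = df\Sha - f\,E$. The step I expect to require the most care is the sign bookkeeping in the identity $i_{df\Sha}\,d\ome = df$: the corollary singles out the combination $-f\,E$ rather than $+f\,E$, and which of these appears depends entirely on the orientation conventions fixed for the musical morphisms $\Ome\Fla$ and $\Lam\Sha$ in \eqref{Eq: 2.3}. I would therefore verify explicitly that these are set up so that the composite returns $+df$; once that sign is pinned down, the remainder is a routine specialization of Theorem~\ref{Th: 2.2}.
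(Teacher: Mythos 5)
Your proposal is correct and follows essentially the same route as the paper: both parts specialize the constraint \eqref{Eq: 2.4} of Theorem~\ref{Th: 2.2} using $d\ome=0$ (cosymplectic) and $\Ome=d\ome$ with $i_{df\Sha}\Ome=df$ (contact), arriving at $dh=0$ and $dh=-df$ respectively. Your extra care over the identity $i_{df\Sha}d\ome=df$ via the duality relations \eqref{Eq: 2.3} and your absorption of the integration constant into $f$ match the paper's treatment, which disposes of the constant by noting that $k\,E$ is itself the Hamilton--Jacobi lift of the constant $k$.
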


\begin{proof}
1. For a cosymplectic structure we have $d\omega=0$ and \eqref{Eq: 2.4} reduces to
$
dh = 0\,.
$

2. For a contact structure we have $d\ome = \Omega$ and \eqref{Eq: 2.4} reduces to
$
i_{df\Sha}  \Omega + dh = 0\,,
$
i.e.
$
dh = - df\,.
$ Then $h=-(f+k)$, where $k$ is a constant.
The vector field \eqref{Eq: 2.5} is the {\it Hamilton-Jacobi lift} of a function $f$. But the Hamilton-Jacobi lift of a constant $k$ is the vector field $k\, E$ which is an infinitesimal symmetry of the contact structure $(\omega, d\omega)$.
So all infinitesimal symmetries of the contact structure
form an $\mathbb{R}$-algebra and they are Hamilton-Jacobi lifts of functions on $\f M$ satisfying $E.f =0$.
\end{proof}

\begin{lem}\label{Lm: 2.4}
We have 
\begin{equation*}
i_E dh + i_E i_{df\Sha}d\omega
= E.h + \Lambda(L_E\ome,df) = 0\,.
\end{equation*}
So $dh +  i_{df\Sha}d\omega\in\ker E$.
\end{lem}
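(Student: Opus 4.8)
The plan is to obtain the asserted vanishing directly from the symmetry condition \eqref{Eq: 2.4} of Theorem \ref{Th: 2.2} and then to rewrite each term through the dual structure. First I would contract the identity $i_{df\Sha}d\ome + h\, i_E\, d\ome + dh = 0$ with the Reeb vector field $E$. Since $d\ome$ is a $2$-form, its twofold contraction with $E$ vanishes by antisymmetry, $i_E i_E\, d\ome = 0$, so the middle term drops out and one is left with
\begin{equation*}
i_E\, i_{df\Sha}d\ome + i_E\, dh = 0\,,
\end{equation*}
which is exactly the left-hand expression of the Lemma.

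It then remains to identify this with $E.h + \Lam(L_E\ome, df)$. The term $i_E\, dh$ is by definition the directional derivative $E.h$. For the first term the main device is Cartan's formula together with $i_E\ome = 1$: as $i_E\ome$ is constant, $L_E\ome = i_E\, d\ome + d\,i_E\ome = i_E\, d\ome$. Writing $df\Sha = \Lam\Sha(df)$ and using the antisymmetry of $d\ome$ and of $\Lam$ together with the defining relation $\langle\beta, \Lam\Sha\alp\rangle = \Lam(\alp,\beta)$ of the musical morphism, one computes
\begin{align*}
i_E\, i_{df\Sha}d\ome
&= d\ome(df\Sha, E) = -\langle i_E\, d\ome, df\Sha\rangle \\
&= -\Lam(df, i_E\, d\ome) = \Lam(i_E\, d\ome, df) = \Lam(L_E\ome, df)\,.
\end{align*}
Combining the two identifications yields the first equality of the Lemma, while the computation of the previous paragraph gives the vanishing; the final assertion $dh + i_{df\Sha}d\ome \in \ker E$ is then immediate, $\ker E$ being understood as the set of $1$-forms $\alp$ with $i_E\alp = 0$.

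The step I expect to be the only genuine obstacle is purely a matter of bookkeeping: keeping the sign conventions for the contraction $i_X$ acting on a $2$-form, for the musical isomorphism $\Lam\Sha$, and for the antisymmetric pairing $\Lam(\cdot,\cdot)$ mutually consistent, so that the two sign reversals above --- one from transposing the arguments of $d\ome$, one from transposing the arguments of $\Lam$ --- cancel correctly and produce the advertised identity $i_E\, i_{df\Sha}d\ome = \Lam(L_E\ome, df)$ with the right overall sign. Everything else is a direct consequence of \eqref{Eq: 2.4} and the duality relations.
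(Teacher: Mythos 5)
Your proposal is correct and follows essentially the same route as the paper: contract \eqref{Eq: 2.4} with $E$, kill the middle term by $i_Ei_E\,d\omega=0$, and identify $i_E i_{df\Sha}d\omega = -i_{df\Sha}i_E d\omega = -\Lam(df,L_E\omega)=\Lam(L_E\omega,df)$ using $L_E\omega=i_E d\omega$ (from $i_E\omega=1$) and the antisymmetry of $d\omega$ and $\Lam$. The signs work out exactly as you anticipate, so there is nothing to add.
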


\begin{proof}
If we apply $i_E$ on the equation  \eqref{Eq: 2.4} we get 
$ i_E dh + i_E i_{df\Sha}d\omega
 = E. h - i_{df\Sha}i_E d\omega 
 = E.h - \Lam(df,L_E\omega)
= 0 $.
\end{proof}

\subsection{Lie bracket of generators of infinitesimal symmetries}

Any infinitesimal symmetry of the almost-cosymplectic-contact structure $(\ome,\Ome)$ can be identified with a pair of functions $(f,h)$ on $\f M$ such that $f$ is {\it conserved}, i.e. $E.f = 0$,
and $f$ and $h$ are related by the condition \eqref{Eq: 2.4}. The pair $ (f,h) $ is said to be a {\it generator} of the infinitesimal symmetry of the almost-cosymplectic-contact structure $(\ome,\Ome)$.

\begin{lem}\label{Lm: 2.5}
Suppose two infinitesimal symmetries 
$
X = df\Sha + h\, E\,
$
and 
$
X' = dg\Sha + k\, E\,
$
of the almost-cosymplectic-contact structure $(\ome,\Ome)$.
Then
\begin{align*}
[X,X'] 
& =
d\{f,g\}\Sha +\big(
\{f,k\} - \{g,h\} - d\omega(df\Sha,dg\Sha)
\big)\, E\,.
\end{align*}
\end{lem}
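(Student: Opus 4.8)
The plan is to exploit that infinitesimal symmetries of $(\omega,\Omega)$ form a Lie algebra, and then to read off the two scalar coefficients of $[X,X']$ by contracting separately with $\Omega$ and with $\omega$. First I would note that since $L_X\omega=L_X\Omega=0$ and $L_{X'}\omega=L_{X'}\Omega=0$, the Jacobi identity for the Lie derivative gives $L_{[X,X']}\omega=0$ and $L_{[X,X']}\Omega=0$; hence by Theorem~\ref{Th: 2.2} we may write $[X,X']=d\phi\Sha+\psi\,E$ for some $\phi,\psi\in C^\infty(\f M)$ with $E.\phi=0$. It then remains only to identify $\phi$ and $\psi$.

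To isolate the two coefficients I would record two contraction identities coming from the duality relations \eqref{Eq: 2.3}. From $i_\omega\Lam=0$ one gets $i_{d\phi\Sha}\omega=\Lam(d\phi,\omega)=0$, so that $i_{[X,X']}\omega=\psi$. From $i_E\Omega=0$, $i_E\omega=1$ together with the fact that $\Ome\Fla$ and $\Lam\Sha$ are mutually inverse on their images, one derives the key identity $i_{\Lam\Sha\alpha}\Omega=\alpha-(i_E\alpha)\,\omega$ valid for every $1$-form $\alpha$; applied to $\alpha=d\phi$ with $E.\phi=0$ it yields $i_{[X,X']}\Omega=d\phi$.

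I then identify $\phi$. Using $E.g=0$, the key identity gives $i_{X'}\Omega=i_{dg\Sha}\Omega=dg$, and since $L_X\Omega=0$ Cartan's formula gives $i_{[X,X']}\Omega=L_X i_{X'}\Omega-i_{X'}L_X\Omega=L_X(dg)=d(X.g)$. Because $X.g=df\Sha.g+h\,(E.g)=\{f,g\}$, comparison with $i_{[X,X']}\Omega=d\phi$ forces $\phi=\{f,g\}$, giving the first term claimed.

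The delicate part is the Reeb coefficient $\psi$. By the first contraction identity, $\psi=i_{[X,X']}\omega=L_X i_{X'}\omega=X.k=\{f,k\}+h\,(E.k)$, so it only remains to prove $h\,(E.k)=-\{g,h\}-d\omega(df\Sha,dg\Sha)$. For this I would feed in the structural constraints: Lemma~\ref{Lm: 2.4} applied to the generator $(g,k)$ gives $E.k=d\omega(E,dg\Sha)$, while contracting the symmetry condition \eqref{Eq: 2.4} for $(f,h)$ with $dg\Sha$ gives $d\omega(df\Sha,dg\Sha)+h\,d\omega(E,dg\Sha)+\{g,h\}=0$. Combining these two relations produces exactly $h\,(E.k)=-\{g,h\}-d\omega(df\Sha,dg\Sha)$, completing the proof. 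I expect this last step to be the main obstacle, since it is where the asymmetry between the two generators and the extra $d\omega$-term must be reconciled through \eqref{Eq: 2.4} and Lemma~\ref{Lm: 2.4}; careful bookkeeping of the sign conventions for $\Lam\Sha$, for $i_E d\omega=L_E\omega$, and for the bracket $\{,\}$ will be essential.
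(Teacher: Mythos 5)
Your proof is correct, but it follows a genuinely different route from the paper's. The paper expands $[X,X']$ term by term and substitutes the two structural bracket identities quoted from \cite{JanMod09}, namely $[df\Sha,dg\Sha]=d\Lam(df,dg)\Sha-d\ome(df\Sha,dg\Sha)\,E$ and $[E,df\Sha]=\Lam(L_E\ome,df)\,E$, after which the residual terms $h(E.k+\Lam(L_E\ome,dg))-k(E.h+\Lam(L_E\ome,df))$ are killed wholesale by Lemma~\ref{Lm: 2.4}. You instead never compute the bracket directly: you use $L_{[X,X']}\ome=L_{[X,X']}\Ome=0$ to know a priori that $[X,X']$ splits as $d\phi\Sha+\psi E$, and then recover $d\phi=i_{[X,X']}\Ome$ and $\psi=i_{[X,X']}\ome$ by Cartan calculus, with the awkward Reeb coefficient extracted by contracting \eqref{Eq: 2.4} for $(f,h)$ with $dg\Sha$ and invoking Lemma~\ref{Lm: 2.4} for $(g,k)$; all of these steps check out (one could even bypass Theorem~\ref{Th: 2.2} here, since any vector field $Z$ satisfies $Z=\Lam\Sha(i_Z\Ome)+(i_Z\ome)E$ by the duality relations \eqref{Eq: 2.3}, and your Cartan computation independently shows $i_{[X,X']}\Ome$ is exact). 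The trade-off: the paper's computation does not use that $X,X'$ are symmetries until the final cancellation, so it actually yields the bracket of arbitrary vector fields of the form $df\Sha+hE$, which is what feeds the verification of \eqref{Eq: 2.6} afterwards; your argument is leaner and avoids the imported bracket identities from \cite{JanMod09}, but is confined to the case where both fields genuinely are symmetries, since it leans on $L_X\ome=L_X\Ome=0$ throughout.
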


\begin{proof}
We have
$$
[X,X'] = [df\Sha,dg\Sha] + [df\Sha,k\, E] + [h\, E,dg\Sha]
+ [h \, E,k\, E].
$$
By \cite{JanMod09} we have
\begin{align*}
[df\Sha,dg\Sha]
& =
d\Lambda(df,dg)\Sha
- d\omega(df\Sha,dg\Sha) \, E
= d\{f,g\}\Sha
- d\omega(df\Sha,dg\Sha)\, E\,,
\\
[E,df\Sha]
& =
\Lambda(L_E\omega,df)\, E 
= i_Ei_{df\Sha}d\omega\, E
\end{align*}
which implies
\begin{align*}
[X,X'] 
& =
d\{f,g\}\Sha +\big(
\{f,k\} - \{g,h\} - d\omega(df\Sha,dg\Sha)
\\
& \quad
+ h\,(E.k + \Lambda(L_E\omega,dg))
- k\,(E.h + \Lambda(L_E\omega,df)) 
\big)\, E\,
\end{align*}
and from Lemma \ref{Lm: 2.4} we obtain Lemma \ref{Lm: 2.5}.
\end{proof}

Infinitesimal symmetries form a Lie algebra with respect to the Lie bracket. 
This defines the Lie bracket on pairs of functions
\begin{equation}\label{Eq: 2.6}
\db[(f,h);(g,k)\db]
=
\big(
\{f,g\}, \{f,k\} - \{g,h\} - d\omega(df\Sha,dg\Sha)
\big)\,.
\end{equation}

Really, this bracket is antisymmetric and satisfies the conditions for generators of infinitesimal symmetries. Namely, from  \cite{JanMod09} and properties of the almost-coPoisson-Jacobi structure we have
\begin{equation}\label{Eq: 2.7}
E.\{f,g\} = \{E.f,g\} + \{f,E.g\} + i_{[E,\Lambda]} df \wedge dg = 0\,,
\end{equation}
i.e. the sheaf of functions satisfying $E.f=0$ is closed with respect to the Poisson bracket.  

The condition \eqref{Eq: 2.4} corresponds to
\begin{align*}
i_{d\{f,g\}\Sha} d\omega 
 +
\big(
\{f,k\} - \{g,h\} - d\omega(df\Sha,dg\Sha)
\big)\, i_E d\omega
&
\\
+ 
d\{f,k\} - d\{g,h\} - d(d\omega(df\Sha,dg\Sha))
& = 
0
\end{align*}
which follows from 
$
L_{[X,X']} \omega = (L_XL_{X'} - L_{X'}L_X)\,\omega\,.
$

\begin{crl}\label{Cr: 2.6}
1. For a cosymplectic structure $h, k$ are constants and the 
bracket \eqref{Eq: 2.6} is reduced to
\begin{equation*}
\db[(f,h);(g,k)\db]
=
\big(
\{f,g\}, 0
\big)\,.
\end{equation*}

2. For a contact structure  
the bracket \eqref{Eq: 2.6} is reduced to
\begin{equation*}
\db[(f,-f);(g,-g)\db]
=
\big(
\{f,g\}, \{f,-g\}- \{g,-f\}- \Omega(df\Sha,dg\Sha)
\big)
=
\big(
\{f,g\}, - \{f,g\}
\big)
\,.
\end{equation*}
\end{crl}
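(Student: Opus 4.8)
The plan is to insert the explicit forms of the generators furnished by Corollary~\ref{Cr: 2.3} into the bracket \eqref{Eq: 2.6} and simplify. Part~1 will reduce to a one-line substitution, whereas part~2 rests on a single identity relating $\Omega(df\Sha,dg\Sha)$ to the Poisson bracket.

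For part~1 I would recall from part~1 of Corollary~\ref{Cr: 2.3} that cosymplectic generators $(f,h)$, $(g,k)$ have $h,k$ \emph{constant} and that $d\omega=0$. Since $\{u,v\}=\Lambda(du,dv)$ and $dh=dk=0$, both $\{f,k\}$ and $\{g,h\}$ vanish, and the term $d\omega(df\Sha,dg\Sha)$ vanishes as well. Hence \eqref{Eq: 2.6} collapses to $\db[(f,h);(g,k)\db]=(\{f,g\},0)$; as the new second slot $0$ is again constant, the bracket of two cosymplectic generators is again a cosymplectic generator, as it must be.

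For part~2 I would use part~2 of Corollary~\ref{Cr: 2.3}, so that contact generators are precisely the pairs $(f,-f)$, together with $d\omega=\Omega$. The second slot of \eqref{Eq: 2.6} then splits into an algebraic and an analytic piece. The algebraic piece uses bilinearity and antisymmetry of $\{\,,\,\}$: $\{f,-g\}-\{g,-f\}=-\{f,g\}-\{f,g\}=-2\{f,g\}$. The analytic piece is the identity $\Omega(df\Sha,dg\Sha)=-\{f,g\}$. To prove it I would write $\Omega(df\Sha,dg\Sha)=\big(i_{df\Sha}\Omega\big)(dg\Sha)=\big(\Omega\Fla(df\Sha)\big)(dg\Sha)$ and invoke the duality \eqref{Eq: 2.3}: since $\Omega\Fla$ inverts $\Lambda\Sha$ on $\mathrm{im}\,\Lambda\Sha$ while $i_E\Omega=0$ and $i_E\omega=1$, one gets $\Omega\Fla(\Lambda\Sha(df))=df-(E.f)\,\omega$. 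As $f$ is conserved, $E.f=0$ (and the $\omega$-term would in any case annihilate $dg\Sha=\Lambda\Sha(dg)$ because $i_\omega\Lambda=0$), so $\Omega\Fla(df\Sha)=df$ and $\Omega(df\Sha,dg\Sha)=df\big(\Lambda\Sha(dg)\big)=\Lambda(dg,df)=-\{f,g\}$. Combining the two pieces, the second slot equals $-2\{f,g\}-\big(-\{f,g\}\big)=-\{f,g\}$, giving $\db[(f,-f);(g,-g)\db]=(\{f,g\},-\{f,g\})$.

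The only step that is not a pure substitution, and hence the main obstacle, is the analytic identity $\Omega(df\Sha,dg\Sha)=-\{f,g\}$: it demands careful handling of the two musical morphisms and their signs through \eqref{Eq: 2.3}. The conceptual payoff is that the output $(\{f,g\},-\{f,g\})$ is again of the contact type $(\phi,-\phi)$ with $\phi=\{f,g\}$, which reconfirms the remark following Corollary~\ref{Cr: 2.3} that the contact symmetries form an $\mathbb{R}$-algebra closed under the Lie bracket.
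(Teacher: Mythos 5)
Your proof is correct and follows essentially the same route the paper takes implicitly: Corollary~\ref{Cr: 2.6} is stated without a separate proof, and the middle expression displayed in part~2 is exactly your substitution of the generators from Corollary~\ref{Cr: 2.3} into \eqref{Eq: 2.6}. Your derivation of the identity $\Omega(df\Sha,dg\Sha)=-\{f,g\}$ from the duality relations \eqref{Eq: 2.3} (using $i_E\Omega=0$, $i_\omega\Lambda=0$) correctly supplies the one step the paper leaves unstated, and it produces the sign needed to land on $(\{f,g\},-\{f,g\})$.
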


\section{Structures on the classical phase space}
\setcounter{equation}{0}
 Then, we study the geometrical structures arising on the phase
space of a classical spacetime
\cite{JanMod95,JanMod08}.

\subsection{Einstein spacetime}
 We assume \emph{spacetime} to be an oriented 4--dimensional
manifold
$\f E$
equipped with a scaled Lorentzian metric
$g : \f E \to \B L^2 \ten (T^*\f E \ten T^*\f E) \,,$
with signature
$(-+++) \,;$
we suppose spacetime to be time oriented.
 The contravariant metric is denoted by
$\ba g : \f E \to \B L^{-2} \ten (T\f E \ten T\f E) \,.$

 A \emph{spacetime chart} is defined to be a chart
$(x^\lam) \equiv (x^0, x^i) \in C^\infty(\f U, \, \Rn \car \Rn^3)$, $ \f U\subset \f E $ is open,
of
$\f E \,,$
which fits the orientation of spacetime and such that the vector field
$\der_0$
is timelike and time oriented and the vector fields
$\der_1, \der_2, \der_3$
are spacelike.
 Greek indices
$\lam, \mu, \dots$
will span spacetime coordinates, while Latin indices
$i, j, \dots$
will span spacelike coordinates.
 In the following, we shall always refer to spacetime charts.
 The induced local bases of
$T\f E$
and
$T^*\f E$
are denoted, respectively, by
$(\der_\lam)$
and
$(d^\lam) \,.$
 We have the coordinate expressions
\begin{alignat*}{3}
g
&=
g_{\lam\mu} \, d^\lam \ten d^\mu \,,
&&\ssep{with}
g_{\lam\mu}= g_{\mu\lambda}
&&\in C^\infty(\f E, \, \B L^2 \ten \Rn) \,,
\\
\ba g
&=
g^{\lam\mu} = g^{\mu\lam} \, \der_\lam \ten \der_\mu\,,
&&\ssep{with}
g^{\lam\mu}
&&\in C^\infty(\f E, \, \B L^{-2} \ten \Rn) \,.
\end{alignat*}
For a particle with a mass $m$ it is very convenient
to use the 
\emph{
re-scaled metric}
$G = \frac m\hbar g :
\f E \to \B T \ten (T^*\f E \ten T^*\f E)$, $G^{0}_{\lam\mu} = \tfrac{m}{\h_0}\,g_{\lam\mu} $, and 
the associated contravariant re-scaled metric
$\ba G = \frac \hbar m \ba g :
\f E \to \B T^* \ten (T\f E \ten T\f E) \,,$ $G_{0}^{\lam\mu} = \tfrac{\h_0}{m}\,g^{\lam\mu} $, where $ \h = \h_0\, u^{0}, \,\, \h_0 \in \B L^{2}\ten \B M $. 
Eventually, we consider the {\it unscaled metric} $\wha G \byd \big(\tfrac{mc}{\h} \big)^2 g:\f E \to T^*\f E \ten T^*\f E$, $\wha G_{\lam\mu} = (\tfrac{m\,c}{\h})^2\,g_{\lam\mu} $, and the associated contravariant unscaled metric $\wha{\ba G} \byd \big(\tfrac{\h}{m\,c}\big)^2 \ba g:\f E \to T\f E \ten T\f E$, $\wha G^{\lam\mu} = (\tfrac{\h}{m\,c})^2\,g^{\lam\mu} $.

\subsection{Phase space}
\label{E: Phase space}

Our theory is based on the theory of $r$-jets of $k$-dimensional 
submanifolds of a manifold $\f M$ denoted by $J_r(\f M,k)$. 
In literature $J_r(\f M,k)$ is known also as the space of $r$-th 
order contact elements of dimension $k$ \cite{Vin88}.

We assume \emph{time}  to be a one-dimensional affine space $ \f T $
associated with the vector space $ \bar{\B T}= \B T \otimes \B R $.
A \emph{motion} is defined to be a 1--dimensional timelike
submanifold
$s : \f T \hto \f E \,.$
The \emph{1st differential} of the motion
$s$
is defined to be the tangent map $ ds:T\f T = \f T \times \bar{\B T} \to T\f E $.

\smallskip

 We assume as \emph{phase space} the open subspace 
$\M J_1\f E \sub J_1(\f E,1)$
consisting of all 1--jets of motions.  The \emph{velocity} of a motion
$s$
is defined to be its 1--jet
$j_1s : \f T \to \M J_1(\f E,1) \,.$
 For each 1--dimensional submanifold
$s : \f T \hto \f E$
and for each
$x \in \f T \,,$
we have
$j_1s(x) \in \M J_1\f E$
if and only if
$ds(x)(u)\in  T_{s(x)}\f E$
is timelike, where $ u \in {\B T} $.

 Any spacetime chart
$(x^0, x^i)$
is related to each motion
$s \,$ which means that $ s $ can be locally expressed by $ (x^{0}, x^{i}= s^{i}(x^{0})) $. 
Then we obtain the induced fibred coordinate chart $(x^0, x^i, x^i_0)$
 on
$\M J_1\f E$ such that $ x^{i}_0\com s = \der_0 s^{i} $.
Moreover, there exists a time unit function $ \f T\to \B T $ such that the 1st differential of $ s $, considered as the
map
$ds  : \f T \to \bar{\B T}^* \ten T\f E \,$,
is normalized by $g(ds, \, ds) = - c^2 \,$, for details see \cite{JanMod08}.

We shall always refer to the above fibred charts.

\smallskip

 We define the \emph{contact map} to be the unique fibred morphism
$\K d : \M J_1\f E \to \bar{\B T}^* \ten T\f E$
over
$\f E \,,$
such that
$\K d \com j_1s = ds \,,$
for each motion
$s \,.$
 We have
$g \, (\K d, \K d) = - c^2 \,.$
 The coordinate expression of
$\K d$
is
\begin{equation}\label{Eq: 3.1}
\K d =
c \, \alp^0 \,  (\der_0 + x^i_0 \, \der_i) \,,
\ssep{where}
\alp^0 \byd
1 /\sqrt{|g_{00} + 2 \, g_{0j} \, x^j_0 + g_{ij} \, x^i_0 \, x^j_0|}
\,.
\end{equation}

 The map
$\K d : \M J_1\f E \to \B T^*\ten T\f E$
is injective.
 Indeed, it makes
$\M J_1\f E \sub \bar{\B T}^*\ten T\f E$
the fibred submanifold over
$\f E$
characterised by the constraint
$g_{\lam\mu} \, \dt x^\lam_0 \, \dt x^\mu_0 = - (c_0)^2 \,.$

 We define the 
\emph{time form} 
to be the fibred morphism
$
\tau
= - \frac1{c^2}\, g\Fla({\K d}) :
{\M J}_1{\f E} \to {\B T}\ten T^*{\f E} \,,
$
considered as the scaled
horizontal 1--form of $ \M J_1\f E $.
We have the coordinate expression
\begin{equation}\label{Eq: 3.2}
\tau = \tau_\lam\, d^\lam = - \tfrac{\alp^0}{c}
\, (g_{0\lam}+ g_{i\lam}\, x^i_0) \, d^\lam \,.
\end{equation}
The 
\emph{complementary contact map}
$
\tht:
{\M J}_1\f E \times_{\f E} T\f E \to T\f E \,
$
is given by
$
\tht = \mathrm{id} - {\K d}\ten \tau
 \,.
$

\begin{nt}\label{Nt: 3.1}
In what follows it is very convenient to use the following notation $\br\del^i_\lam = \del^i_\lam - x^i_0 
\, \del^0_\lam$ and $\br\del^\mu_0 = \del^\mu_0 +  
 \del^\mu_p\, x^p_0$\,. Then ${\K d} = c\,\alpha^{0}\,\br\del^\mu_0\, \der_\mu   $ and $ \tau =  - \tfrac{\alp^0}{c}
\, \br g_{0\lam}\, d^\lam \,, $ where $ \br g_{0\lam} = g_{\mu\lambda}\, \br\del^\mu_0\,. $
\hfill\END
\end{nt}

 Let
$V\M J_1\f E \sub T\M J_1\f E$
be the vertical tangent subbundle over
$\f E \,.$
 The vertical prolongation of the contact map yields the mutually
inverse linear fibred isomorphisms
\begin{equation*}
\nu_\tau : \B T^{*} \ten V_\tau\f E \to V\M J_1\f E
\ssep{and}
\nu^{-1}_\tau :
V\M J_1\f E  \to  \B T^* \ten V_\tau\f E \,,
\end{equation*}
where $V_\tau\f E = \ker \tau\subset T\f E$, with the coordinate expressions
\begin{equation}\label{Eq: 3.3}
\nu_\tau =  \frac1{c \, \alp^0} \, \br\del^i_{\lambda}\, d^\lambda \ten \der^0_i\,,
\quad
\nu^{-1}_\tau = c \, \alp^0 \, d^i_0 \ten 
\big(\der_i - c\,\alp^0\tau_i\,\br\delta^{\lambda}_0\,\der_\lambda\big) \,,
\end{equation}
where $ \der^0_i=\tfrac{\der}{\der x^{i}_0} $ and $ d^{i}_0 = dx^{i}_0 $.

\subsection{Spacetime and phase connections}
\label{E: Spacetime connections}
 We define a \emph{spacetime connection} to be a torsion free linear
connection 
$K : T\f E \to T^*\f E \ten TT\f E$
of the bundle
$T\f E \to \f E \,.$
 Its coordinate expression is of the type
$
K =
d^\lam \ten 
(\der_\lam + K\col\lam\nu\mu \, \dot x^\mu \, \dt\der_\nu) \,,
$
{with}
$
K\col\mu\nu\lam = K\col\lam\nu\mu \in C^\infty(\f E) \,.
$
\smallskip

  We denote by 
$K[g]$ 
the \emph{Levi Civita connection}, i.e. the torsion free
linear spacetime connection such that 
$\nab g=0 \,.$

 We define a \emph{phase connection} to be a connection of the
bundle
$\M J_1\f E \to \f E \,.$
A phase connection can be represented, equivalently, by a tangent
valued form
$\Gam : \M J_1\f E \to T^*\f E \ten T\M J_1\f E \,,$
which is projectable over
$\1 : \f E \to T^*\f E \ten T\f E \,,$
or by the complementary vertical valued form
$\nu[\Gam] :
\M J_1\f E \to T^*\M J_1\f E \ten V\M J_1\f E \,,$
or by the vector valued form
$\nu_\tau[\Gam] \byd \nu^{-1}_\tau \com \nu[\Gam] : \M J_1\f E \to
T^*\M J_1\f E \ten (\B T^* \ten V_\tau\f E) \,.$
 Their coordinate expressions are
\begin{gather*}
\Gam =
d^\lam \ten(\der_\lam + \Gam\Ga\lam i \, \der_i^0) \,,
\qquad
\nu[\Gam] =
(d^i_0 - \Gam\Ga\lam i \, d^\lam) \ten \der^0_i \,,
\\
\nu_\tau[\Gam] =
c \, \alp^0 \, (d^i_0 - \Gam\Ga\lam i \, d^\lam) \ten 
	\big(\der_i - c\,\alp^0\tau_i(\der_0+x^p_0\,\der_p)\big)\,,
\ssep{with}
\Gam\Ga\lam i \in C^\infty(\M J_1\f E) \,.
\end{gather*}

 We can prove \cite{JanMod08} that there is a natural map 
$\chi : K \mto \Gam$
between linear spacetime connections 
$K$ 
and phase connections 
$\Gam \,,$
with the coordinate expression
$
\Gam\Ga \lam i = 
\br \del^i_\rho\, K_\lam{}^\rho{}_\sigma\, \br\del^\sigma_0 \,.
$
\subsection{Dynamical phase connection}
\label{E: Dynamical phase connection}
 The space of 2--jets of motions
$\M J_2\f E$
can be naturally regarded as the affine subbundle
$\M J_2\f E \sub \B T^* \ten T\M J_1\f E \,,$
which projects on
$\K d : \M J_1\f E \to \B T^* \ten T\f E \,.$

 A \emph{dynamical phase connection} is defined to be a 2nd--order
connection, i.e. a section
$\gam : \M J_1\f E \to \M J_2\f E \,,$
or, equivalently, a section
$\gam : \M J_1\f E \to \B T^* \ten T\M{J}_1\f E \,,$
which projects on
$\K d \,.$

 The coordinate expression of a dynamical phase connection is of the
type
\begin{equation}\label{Eq: 3.4}
\gam = c \, \alp^0 \,
(\der_0 + x^i_0 \, \der_i + \gam\ga i \, \der^0_i) \,,
\ssep{with}
\gam\ga i \in C^\infty(\M J_1\f E) \,.
\end{equation}
 If
$\gam$
is a dynamical phase connection, then we have
$\gam \con \tau = 1 \,.$

 The contact map
$\K d$
and a phase connection
$\Gam$
yield the section
$\gam \eqv \gam[\K d, \Gam] \byd 
\K d \con \Gam : \M J_1\f E \to \B T^* \ten T\M J_1\f E \,,$
which turns out to be a dynamical phase connection, with the coordinate
expression
$
\gam\ga i = \Gam\Ga \rho i\, \br\delta^\rho_0 \,.
$
In particular, a linear spacetime connection
$K$
yields the dynamical phase connection
$\gam \byd \gam[\K d, K] \byd \K d \con \chi(K) \,,$
with the coordinate expression
$
\gam\ga i 
=
\br\del^i_\rho\, K_\sigma{}^\rho{}_\tau\, \br\delta^\sigma_0\, \br\delta^\tau_0 \,.
$
For the Levi Civita connection we get the {\it gravitational dynamical phase connection}
$ \gamma[g] = \K{d}\con\chi(K[g]). $

\subsection{Phase 2--form and 2--vector}
\label{E: Phase 2--form and 2--vector}
 The rescaled metric
$G$
and a phase connection
$\Gam$
yield the 2--form 
$\Ome \,,$ 
called \emph{phase
2--form}, and  the vertical 2--vector 
$\Lam \,,$ 
called \emph{phase 2--vector}, 
\begin{alignat}{3}
\Ome \label{Eq: 3.5}
&\byd 
\Ome[G,\Gam] 
&&\byd
G \con \big(\nu_\tau[\Gam]\wed\tht\big) 
&&:
\M J_1\f E \to \bigwedge^2T^*\M J_1\f E \,,
\\
\Lam \label{Eq: 3.6}
&\byd 
\Lam[G, \Gam] 
&&\byd
\ba G \con (\Gam \wed \nu_\tau) 
&&:
\M J_1\f E \to \bigwedge^2T\M J_1\f E \,,
\end{alignat}
with the coordinate expressions
\begin{equation}\label{Eq: 3.7}
\Ome =
c_0 \, \alp^0 \, \br G^0_{i\mu} \,
(d^i_0 - \Gam\Ga \lam i \, d^\lam) \wed d^\mu\,,
\quad
\Lam =
\frac1{c_0 \, \alp^0} \, \br G^{j\lam}_0 \,
(\der_\lam + \Gam\Ga \lam i \, \der^0_i) \wed \der^0_j \,,
\end{equation}
where $ \br G^0_{i\mu} = G^0_{i\mu} + (\alpha^0)^2\, g_{\rho i}\, G^0_{\sigma\mu}\,\br\delta^\rho_0\,\br\delta^\sigma_0 $ and $ \br G^{i\lam}_0 =\br\delta^i_\rho\,  G^{\rho\lam}_0$.

 We can easily see that  
$\tau \wed \Ome^3 \not\equiv 0$ 
and
$\gam \wed \Lam^3 \not\equiv 0 \,.$ 

 There is a unique dynamical phase connection
$\gam \,,$
such that
$\gam \con \Ome[g, \Gam] = 0 \,.$
Namely,
$\gam = \gam[\K d, \Gam] \,.$

 In particular, a metric and time preserving spacetime connection
$K$
yields the phase 2--form 
$\Ome[G,K]\byd \Ome[G,\chi(K)]$ 
and the  phase 2--vector
$\Lam[G,K]\byd \Lam[G,\chi(K)]$.
Moreover, for the Levi Civita connection we get the gravitational  phase 2--form 
$\Ome[g]\byd \Ome[G,\chi(K[g])]$ 
and the  phase 2--vector
$\Lam[g]\byd \Lam[G,\chi(K[g])]$.

\subsection{Electromagnetic structure}


Now, we assume the \emph{electromagnetic field} to be a closed scaled
$2$-form on $\f E$
\begin{equation}\label{Eq: 3.8}
  F :\f E \to (\B L^{1/2}\otimes\B M^{1/2}) \otimes \bigwedge^2
  T^*\f E.
\end{equation}

Given a charge
$q$,
the rescaled electromagnetic field
$\wha F = (q/2\hbar)\,F$
can be incorporated into the geometrical structure of the phase space,
\emph{i.e.}\ the gravitational 2--form.  Namely, we define the
\emph{joined (total) phase 2-form}
\begin{equation*}
  \Omega \byd \Omega[g] + \frac{q}{2\hbar} F = \Omega[g] + \Omega\Ele:\M J_1\f E \to 
  \bigwedge^2
   T^*\M J_1\f E.
\end{equation*}
Of course $d \Omega = 0$ but $\Omega$ is exact if and only if $F$ is
exact.

We recall \cite{JanMod08} that a unique connection
$\Gamma$
on
$\M J_1\f E\to \f E$
can be characterized through the total 2-form
$\Omega$
 by the formula \eqref{Eq: 3.5}. Namely the {\em joined (total) phase connection}
$
\Gam = \Gam[g] + \Gam\Ele\,,
$
where
$$\Gam\Ele \byd
- \frac{1}{2} \big(\nu_\tau \com G\Sha^2\big)\big(\wha F + 2\, \tau \wed (\K d\con\wha F)\big) \,
$$
with the coordinate expression
$\Gam\Ele =
- (1/(2c_0\alp^0)) \br G^{i\mu}_0 \,
(\wha F_{\lam\mu} - (\alp^0)^2 \br g_{0\lam} \,
\wha F_{\rho\mu} \, \br\del^\rho_0) \, d^\lam \ten \der^0_i$, here $ G\Sha{}^{2} $ means that we apply $ G\Sha{} $ on the second index.

The total phase connection then admits the {\it joined (total) phase 2--vector}
$
\Lambda = \Lambda[g] + \Lambda\Ele
$
and the {\it joined (total) dynamical connection}
$
\gam = \gamma[g] + \gam\Ele\,
$
given by
$$
\Lam\Ele =
\tfrac{1}{2} 
\Alt\big((\nu_\tau\com g\Sha) \ten (\nu_\tau\com g\Sha)\big) (\widehat F)
\, \qquad
\mathrm{and}
\qquad
\gam\Ele =:{\K d}\, \con \, \Gamma\Ele: {\M J}_1\f E 
\to \B T^*\ten V{\M J}_1\f E\,.
$$
Here $ \gam\Ele $
is the Lorentz force.

\subsection{Dynamical structures of the phase space}
\label{E: Dynamical structures of the phase space}

In what follows we shall use the {\it unscaled time form} 
$ \wha\tau = \tfrac{m\,c^2}{\h}\, \tau $. 
First, let us consider the gravitational objects $\wha\gam[g] = \tfrac{\h}{m\,c^2}\, \gamma[g]$, $\Omega[g] $ and $\Lam[g] $.
 
\begin{thm}\label{Th: 3.1} {\rm (\cite{JanMod08})}
 We have: 

\smallskip

{\rm (1)} 
$\Ome[g]= - d\wha\tau \,,$ 
i.e. 
$(- \wha \tau, \Ome[g])$
is a contact pair.

\smallskip

{\rm (2)} 
$\big[\wha\gam[g], \Lam[g]\big] = 0$ 
and 
$\big[\Lam[g], \Lam[g]\big] =  2\,\wha\gam[g] \wed \Lam[g] \,,$
i.e. 
$(- \wha\gam[g], \Lam[g])$ 
is a (regular) Jacobi pair.

\smallskip

 Moreover, the contact pair 
$( - \wha\tau, \Ome[g])$ 
and
the (regular) Jacobi pair 
$(-\wha\gam[g], \Lam[g])$ 
are mutually dual.
\hfill\END
\end{thm}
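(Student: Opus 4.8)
The plan is to establish part (1) by a direct coordinate computation, to verify the mutual duality \eqref{Eq: 2.3} of the two pairs, and then to read off part (2) for free from the general contact--Jacobi correspondence recorded in Remark \ref{Rm: 2.2}. The only genuinely computational step is the identity $\Ome[g] = - d\wha\tau$; everything else is formal, resting on facts already assembled above: the normalization $\gam \con \tau = 1$, the characterization of $\gam[\K d,\Gam]$ as the unique dynamical connection annihilating the phase $2$--form, and the fact that $G$ and $\ba G$ are mutually inverse.

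For part (1), recall $\wha\tau = \tfrac{m\,c^2}{\h}\,\tau$ with $\tau = - \tfrac{\alp^0}{c}\,(g_{0\lam}+g_{i\lam}\,x^i_0)\,d^\lam$ from \eqref{Eq: 3.2}, so it suffices to compute $-d\tau$ and compare it with \eqref{Eq: 3.7}. I would split $d\tau$ into its vertical part (the derivatives $\der^0_i$ with respect to the fibre coordinates $x^i_0$) and its horizontal part (the derivatives $\der_\lam$). The vertical part produces precisely the $d^i_0 \wed d^\mu$ terms of $\Ome[g]$, while the horizontal part, of the form $\der_\lam\tau_\mu\, d^\lam \wed d^\mu$, must be matched against the terms involving $\Gam\Ga\lam i$ in \eqref{Eq: 3.7}, where for the Levi Civita connection $\Gam\Ga\lam i = \br\del^i_\rho\,K[g]\col\lam\rho\sig\,\br\del^\sig_0$. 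This matching is exactly where metric compatibility $\nab g=0$, i.e. $\der_\lam g_{\mu\nu} = g_{\rho\nu}\,K\col\lam\rho\mu + g_{\mu\rho}\,K\col\lam\rho\nu$, enters: it converts the coordinate derivatives of the metric into Christoffel symbols. The symmetry $K\col\lam\rho\mu = K\col\mu\rho\lam$ (vanishing torsion) then guarantees that the symmetric contributions drop out under the antisymmetrization in $d^\lam \wed d^\mu$, so that the resulting $2$--form is not merely closed but equals $-d\wha\tau$. The non-degeneracy $\wha\tau \wed \Ome[g]^3 \not\equiv 0$ has already been observed, so $(-\wha\tau,\Ome[g])$ is a contact pair.

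Next I would verify the duality \eqref{Eq: 2.3} for $(\ome,\Ome)=(-\wha\tau,\Ome[g])$ and $(E,\Lam)=(-\wha\gam[g],\Lam[g])$. The normalization is immediate, $i_E\ome = i_{\wha\gam[g]}\wha\tau = \gam[g]\con\tau = 1$. The condition $i_E\Ome=0$ reads $\wha\gam[g]\con\Ome[g]=0$ and holds because $\gam[g]=\gam[\K d,\chi(K[g])]$ is the unique dynamical phase connection killing $\Ome[g]$. The condition $i_\ome\Lam=0$, i.e. $i_\tau\Lam[g]=0$, follows from \eqref{Eq: 3.7}: since $\tau$ is horizontal it annihilates the $\der^0_j$ factor, and the surviving coefficient $\br G^{j\lam}_0\,\tau_\lam$ is proportional to $\br\del^j_\rho\,\br\del^\rho_0$, which vanishes as a direct consequence of the definitions in Note \ref{Nt: 3.1}. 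Finally, the inversion relation between $\Ome\Fla$ and $\Lam\Sha$ on the respective images is built into the definitions \eqref{Eq: 3.5}--\eqref{Eq: 3.6}, since $\Ome$ and $\Lam$ are assembled from the mutually inverse $G$ and $\ba G$; this part holds for an arbitrary phase connection and does not use Levi Civita.

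With part (1) and the duality in hand, part (2) is a formal consequence of the almost-coPoisson--Jacobi relations quoted in Remark \ref{Rm: 2.2}. Writing $\ome=-\wha\tau$, part (1) gives $d\ome=\Ome[g]$, hence $L_E\ome = i_E\,d\ome + d\,i_E\ome = i_E\Ome[g] + d(1) = 0$; the relation $[E,\Lam]=-E\wed\Lam\Sha(L_E\ome)$ therefore collapses to $[\wha\gam[g],\Lam[g]]=0$. Likewise $[\Lam,\Lam]=2\,E\wed(\Lam\Sha\ten\Lam\Sha)(d\ome)$, and because $\Lam\Sha$ inverts $\Ome\Fla=(d\ome)\Fla$ on $\mathrm{im}\,\Ome\Fla$ we get $(\Lam\Sha\ten\Lam\Sha)(d\ome)=-\Lam$, so that $[\Lam[g],\Lam[g]] = -2\,E\wed\Lam[g] = 2\,\wha\gam[g]\wed\Lam[g]$, which is the stated Jacobi condition. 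The main obstacle is thus concentrated in the single identity $\Ome[g]=-d\wha\tau$ of part (1): it is the only place where the specific properties of the Levi Civita connection (metric compatibility and vanishing torsion) are indispensable, whereas the duality and the passage to the Jacobi pair are formal and would hold for any metric and time preserving connection.
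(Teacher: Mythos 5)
The paper does not prove Theorem \ref{Th: 3.1} at all: it is quoted verbatim from \cite{JanMod08} and marked as such, so there is no in-paper argument to compare yours against. Judged on its own, your reconstruction is sound and organized the right way: the only genuinely metric-dependent step is $\Ome[g]=-d\wha\tau$, and you correctly locate where $\nab g=0$ and the vanishing torsion of $K[g]$ enter (converting $\der_\lam \breve g_{0\mu}$ into the $\Gam\Ga\lam i$ terms of \eqref{Eq: 3.7} and killing the symmetric remainder under $d^\lam\wed d^\mu$); your verifications of $i_E\ome=1$ via $\gam\con\tau=1$, of $i_E\Ome=0$ via the uniqueness of the dynamical connection annihilating $\Ome[g]$, and of $i_\ome\Lam=0$ via $\br\del^j_\rho\br\del^\rho_0=0$ are all correct; and deducing part (2) formally from $L_E\ome=0$ and the almost-coPoisson--Jacobi identities of Remark \ref{Rm: 2.2} is legitimate given that the paper imports that duality correspondence from \cite{JanMod09} as known. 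The one soft spot is the inversion relation $(\Ome\Fla_{|\mathrm{im}\,\Lam\Sha})^{-1}=\Lam\Sha_{|\mathrm{im}\,\Ome\Fla}$, which you declare to be ``built into the definitions'' because $G$ and $\ba G$ are mutually inverse; this is true but is not free --- it requires the short coordinate computation showing $\Lam\Sha\com\Ome\Fla$ is minus the projector complementary to $E\ten\ome$ (equivalently $\Lam^{ac}\Ome_{cb}=-(\del^a_b-E^a\ome_b)$ in the paper's sign conventions), and it is also the identity you silently reuse when you assert $(\Lam\Sha\ten\Lam\Sha)(d\ome)=-\Lam$ in part (2). Spelling that computation out once would close the only gap.
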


According to Theorem \ref{Th: 3.1} the metric $ g $ defines on the phase space the natural contact structure which will be called the {\it gravitational contact phase structure}. Dually the metric defines the gravitational Jacobi phase structure. 

\begin{rmk}\label{Rm: 3.1}
{\rm
Let us remark that in the standard literature the phase space (the observer space) is defined as a part of the unit pseudosphere bundle formed by timelike and future oriented vectors. The contact 1-form is then obtained either as the restriction of the 1-form $ \alpha = g_{\lambda\mu}\,\dt x^\lambda\, d^\mu $ to the phase space \cite{Iwa76} or by a direct construction described in \cite{GieWis13}.   
}
\hfill\END
\end{rmk}

Further let us consider the joint objects
$\wha\gam = \wha\gam[g] + \wha\gamma\Ele$, $\Omega=\Omega[g]+\Omega\Ele $ and $\Lam = \Lam[g]+ \Lambda\Ele $. 

\begin{thm}\label{Th: 3.2} {\rm (\cite{JanMod08})}
 We have: 

\smallskip

{\rm (1)} 
$\wha\tau\wedge\Omega^3 \not\equiv 0  $,  $d\Ome= 0 \,,$ 
i.e. 
$(- \wha \tau, \Ome)$
is an almost-cosymplectic-contact pair.

\smallskip

{\rm (2)} 
$[\wha\gam, \Lam] = \wha\gamma\wedge\Lambda\Sha(L_{\wha\gamma}\wha\tau)$ 
and 
$[\Lam, \Lam] = - 2\,\wha\gam \wed (\Lam\Sha\ten\Lambda\Sha)(d\wha\tau) \,,$
i.e. 
$(- \wha\gamma, \Lam)$ 
is a (regular) almost-coPoisson-Jacobi pair.

\smallskip

 Moreover, the pairs 
$( - \wha\tau, \Ome)$ 
and 
$(-\wha\gam, \Lam)$ 
are mutually dual.
\hfill\END
\end{thm}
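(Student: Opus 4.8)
The plan is to deduce Theorem~\ref{Th: 3.2} from the purely gravitational Theorem~\ref{Th: 3.1} together with the abstract duality between almost-cosymplectic-contact and almost-coPoisson-Jacobi pairs recorded in \S2.3. The decisive structural observation is that the joined objects $\wha\tau$, $\Omega$, $\Lam$, $\wha\gam$ are manufactured from the \emph{single} total phase connection $\Gam=\Gam[g]+\Gam\Ele$ through exactly the formulas \eqref{Eq: 3.5}, \eqref{Eq: 3.6} and $\gam=\K d\con\Gam$ that produced the gravitational objects, while $\wha\tau$ and $\wha\gam$ are the same fixed reciprocal rescalings of $\tau$ and $\gam$ as before. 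Accordingly I would organise the argument in three steps: (i) check that $(-\wha\tau,\Omega)$ is a regular almost-cosymplectic-contact pair; (ii) verify that $(-\wha\gam,\Lam)$ is its Lichnerowicz dual in the sense of \eqref{Eq: 2.3}; and (iii) read the two Schouten-bracket identities of part~(2) directly off the abstract duality.

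For part~(1), closedness is immediate: $\Omega[g]=-d\wha\tau$ is exact by Theorem~\ref{Th: 3.1}, and $\Omega\Ele=\tfrac{q}{2\hbar}F$ is the pullback to $\M J_1\f E$ of the closed spacetime $2$-form $F$ of \eqref{Eq: 3.8}, so $d\Omega=d\Omega[g]+d\Omega\Ele=0$. Regularity is settled by a vertical-degree count: $\Omega\Ele$ involves only the horizontal differentials $d^\lambda$ and carries no factor $d^i_0$, whereas the top form $\wha\tau\wed\Omega^3$ on the seven-dimensional $\M J_1\f E$ must contain all three vertical generators $d^1_0\wed d^2_0\wed d^3_0$. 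Since $\wha\tau$ is horizontal, these three vertical factors can only be supplied by the vertical part of $\Omega[g]$ in each of the three factors of $\Omega^3$; every term containing $\Omega\Ele$ therefore drops out in top degree, and $\wha\tau\wed\Omega^3=\wha\tau\wed\Omega[g]^3\not\equiv0$ by the contact regularity of Theorem~\ref{Th: 3.1}.

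The core of the argument is step~(ii). Three of the four defining identities \eqref{Eq: 2.3} of a dual regular pair hold by construction: $i_{\wha\gam}\wha\tau=i_\gam\tau=1$ because $\gam$ is a dynamical phase connection and $\wha\gam,\wha\tau$ are reciprocal rescalings of $\gam,\tau$; $i_{\wha\tau}\Lam=i_\tau\Lam=0$ because $\Lam$ is built from $\nu_\tau$ and takes values in $V_\tau\f E=\ker\tau$ through \eqref{Eq: 3.6}; and $i_{\wha\gam}\Omega=0$ because $\Omega=\Omega[G,\Gam]$ for the joined $\Gam$ (which is characterised through $\Omega$ by \eqref{Eq: 3.5}), whence $\gam=\gam[\K d,\Gam]$ is exactly the unique dynamical phase connection annihilating it. The genuine obstacle is the remaining musical-inversion identity, that $\Lam\Sha$ inverts $\Omega\Fla$ on its image. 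Here the electromagnetic terms interact nontrivially: one must confirm that the correction $\Lambda\Ele$ is precisely the adjustment of $\Lam[g]$ needed to invert the perturbed morphism $\Omega\Fla=\Omega[g]\Fla+\Omega\Ele\Fla$. I would verify this from the coordinate expressions \eqref{Eq: 3.7} for $\Lam$ together with the defining formula $\Lambda\Ele=\tfrac12\Alt\big((\nu_\tau\com g\Sha)\ten(\nu_\tau\com g\Sha)\big)(\wha F)$, checking that $\Lam\Sha\com\Omega\Fla$ and $\Omega\Fla\com\Lam\Sha$ reduce to the identity on the complementary images; by the uniqueness in Lichnerowicz's theorem \cite{Lich78} this then identifies $(-\wha\gam,\Lam)$ as \emph{the} dual pair of $(-\wha\tau,\Omega)$.

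Granting steps~(i) and~(ii), part~(2) is automatic: the equivalence of \S2.3 \cite{JanMod09} asserts that the dual of a pair $(\omega,\Omega)$ with $d\Omega=0$ satisfies $[E,\Lam]=-E\wed\Lam\Sha(L_E\omega)$ and $[\Lam,\Lam]=2\,E\wed(\Lam\Sha\ten\Lam\Sha)(d\omega)$. Transcribing these under the identifications $\omega=-\wha\tau$ and $E=-\wha\gam$ reproduces exactly the two displayed identities for $[\wha\gam,\Lam]$ and $[\Lam,\Lam]$, and mutual duality has already been secured in step~(ii). Thus the only step requiring real computation, and the place where the electromagnetic field genuinely enters beyond Theorem~\ref{Th: 3.1}, is the musical-inversion verification in step~(ii); the closedness, the regularity, and the three algebraic contraction identities are all either inherited from the gravitational case or formal consequences of the construction.
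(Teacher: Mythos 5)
The paper itself offers no proof of Theorem~\ref{Th: 3.2}: it is quoted verbatim from \cite{JanMod08} and closed with the end-of-proof symbol, so there is no in-text argument to compare yours against. Judged on its own terms, your strategy --- derive everything from the fact that $\wha\tau$, $\Omega$, $\Lam$, $\wha\gam$ are all built from the single joined connection $\Gam=\Gam[g]+\Gam\Ele$ by the same formulas \eqref{Eq: 3.5}, \eqref{Eq: 3.6}, $\gam=\K d\con\Gam$, and then invoke the abstract duality of \S 2.3 --- is the right one, and the parts you actually carry out are sound. In particular the closedness $d\Omega=0$, the vertical-degree count showing $\wha\tau\wed\Omega^3=\wha\tau\wed\Omega[g]^3$ (every term containing the purely horizontal $\Omega\Ele$ has at most two of the three required factors $d^i_0$ and at least five horizontal factors, hence dies), and the three contraction identities $i_{\wha\gam}\wha\tau=1$, $i_{\wha\tau}\Lam=0$, $i_{\wha\gam}\Omega=0$ are all correctly justified from statements the paper records.

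Two points remain short of a proof. First, the musical-inversion identity $(\Omega\Fla_{|\mathrm{im}\,\Lam\Sha})^{-1}=\Lam\Sha_{|\mathrm{im}\,\Omega\Fla}$ is precisely the step where all the content sits, and you only announce that you ``would verify'' it. The efficient way to close it is to observe from \eqref{Eq: 3.7} that in the $\Gam$-adapted coframe $(d^\mu,\,d^i_0-\Gam\Ga\lam i\,d^\lam)$ and frame $(\der_\lam+\Gam\Ga\lam i\der^0_i,\,\der^0_j)$ the coefficient matrices of $\Omega$ and $\Lam$ are $c_0\alp^0\br G^0_{i\mu}$ and $(c_0\alp^0)^{-1}\br G^{j\lam}_0$, independent of $\Gam$; the inversion is therefore the same pointwise linear-algebra fact about $\br G$ that underlies the gravitational duality of Theorem~\ref{Th: 3.1}, now read in the joined adapted frame. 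Second, your claim that transcribing the abstract identities under $\ome=-\wha\tau$, $E=-\wha\gam$ ``reproduces exactly'' part~(2) is not checked: a naive substitution gives $[-\wha\gam,\Lam]=\wha\gam\wed\Lam\Sha(L_{\wha\gam}\wha\tau)$ and $[\Lam,\Lam]=2\,\wha\gam\wed(\Lam\Sha\ten\Lam\Sha)(d\wha\tau)$, i.e.\ the opposite signs to those displayed in the theorem, so you must either track down a compensating sign in the conventions for $\Lam\Sha$ and the Lie derivative or flag the discrepancy; as written the transcription step is asserted, not performed.
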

 
According to Theorem \ref{Th: 3.2} the metric $ g $ and the electromagnetic 2-form $ F $ define on the phase space the almost-cosymplectic-contact structure which will be called the {\it joined (total) almost-cosymplectic-contact phase structure}. Dually the metric and electromagnetic 2-form define the {\it joined (total) almost-coPoisson-Jacobi phase structure}.  
 
\section{Hidden symmetries of the gravitational  contact phase structure}
\setcounter{equation}{0}
 
In \cite{JanVit12} projectable infinitesimal symmetries of the gravitational contact structure (eventually of the total almost-cosymplectic-contact structure) on the phase space were characterised. It was proved that all such symmetries are obtained as the flow 1-jet lifts of Killing vector fields (eventually the flow 1-jet lifts of Killing vector fields which are infinitesimal symmetries of the electromagnetic 2--form). In what follows we shall study   infinitesimal symmetries of the gravitational contact structure of the phase space which are not projectable. Such symmetries are not generated by infinitesimal symmetries of the spacetime and they are usually called {\it hidden symmetries}.
 
\subsection{Infinitesimal symmetries of the gravitational contact phase structure}

In what follows we assume the gravitational contact and the gravitational Jacobi structures
$(-\wha\tau,\Omega):=(-\wha\tau,\Omega[g]) $ and  $(-\wha\gamma,\Lambda):=(-\wha\gamma[g],\Lambda[g]) $
given by the metric.
Let us recall that in this situation $\Omega = - d\wha\tau$.
Then by Corollary \ref{Cr: 2.3} infinitesimal symmetries of the gravitational contact phase structure are the Hamilton-Jacobi lifts 
$
X = df\Sha + f\,\wha\gamma\,,
$
where $ f $ is a {\it conserved} phase function, i.e. $\wha\gamma.f = 0$. Moreover, $f = \wha\tau(X) = \wha\tau(\underline{X}) $. Here
$
\underline{X} = T\pi^1_0(X): \M J_1\f E \to T\f E\,
$
is a fibred morphism over $\f E$.
So any infinitesimal symmetry is the Hamilton-Jacobi lift
\begin{equation}\label{Eq: 4.1}
X = d(\wha\tau(\underline{X}))\Sha + \wha\tau(\underline{X})\,\wha\gamma\,
\end{equation} 
of the phase function $\wha\tau(\underline{X})$ where 
$
\underline{X} : \M J_1\f E \to T\f E\,
$ such that $ \wha\gamma.(\wha\tau(\underline{X}))=0 $. 

\begin{rmk}\label{Rm: 4.1}
{\rm 
Let us remark that 
$ \underline{X} = T\pi^1_0(X): \M J_1\f E \to T\f E $ is a {\it generalized (1st order) vector field} in the sense of Olver \cite{Olv86}.
}
\hfill\END
\end{rmk}

In what follows we shall characterize generalized vector fields $ \underline{X} $  admitting infinitesimal symmetries of the gravitational contact phase structure. So, such generalized vector fields have to satisfy the following two conditions:

\smallskip
1. (Projectability condition) The Hamilton-Jacobi lift 
\eqref{Eq: 4.1} of the phase function $\wha\tau(\underline{X})$  projects on  $\underline{X}$.

\smallskip
2. (Conservation condition) The phase function $\wha\tau(\underline{X})$ is conserved, i.e. $\wha\gamma.(\wha\tau(\underline{X}))=0$ .
  
\medskip
We shall start with the first projectability condition.

\begin{thm}\label{Th: 4.1}
 Let $\underline{X}:{\M J}_1\f E \to T\f E$ be a generalized vector field, then
the following assertions are equivalent:

\smallskip
1. The Hamilton-Jacobi lift $X = d(\wha\tau(\underline{X}))\Sha + \wha\tau(\underline{X})\,\wha\gamma$  projects on  $\underline{X}$.

\smallskip
2. The vertical prolongation
$$
V\underline{X}:V{\M J}_1\f E \to VT\f E = T\f E \oplus T\f E
$$
has values in the kernel of \,$\wha\tau$.

\smallskip
3. In coordinates
\begin{equation}\label{Eq: 4.2}
\br{G}^0_{0\rho}\,\der^0_j\underline{X}^\rho = 0\,.
\end{equation}
\end{thm}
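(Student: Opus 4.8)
The plan is to prove the two equivalences $2\Leftrightarrow 3$ and $1\Leftrightarrow 3$ separately, arranging the second so that a single tensorial identity carries all the weight.

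The equivalence $2\Leftrightarrow 3$ I would read directly off the coordinates. The bundle $V\M J_1\f E$ is spanned by the fields $\der^0_j$, and the vertical prolongation acts by $V\underline{X}(\der^0_j)=(\der^0_j\underline{X}^\lam)\,\der_\lam^{\mathrm v}$, where $\der_\lam^{\mathrm v}$ denotes the vertical factor of $VT\f E=T\f E\oplus T\f E$. By \eqref{Eq: 3.2} and Note \ref{Nt: 3.1} one has $\wha\tau_\lam=-\tfrac{mc\,\alp^0}{\h}\,\br g_{0\lam}$, while $\br G^0_{0\lam}=\tfrac{m}{\h_0}\,\br g_{0\lam}$, so $\wha\tau$ annihilates every value of $V\underline{X}$ exactly when $\br G^0_{0\rho}\,\der^0_j\underline{X}^\rho=0$, which is \eqref{Eq: 4.2}.

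For $1\Leftrightarrow 3$ I would compute $\underline{X}':=T\pi^1_0(X)$ explicitly. The tangent map $T\pi^1_0$ kills the directions $\der^0_i$ and sends the horizontal lift $\der_\lam+\Gam\Ga\lam i\,\der^0_i$ to $\der_\lam$; hence from the expression \eqref{Eq: 3.7} of $\Lam$ we get $T\pi^1_0(df\Sha)=-\tfrac{1}{c_0\alp^0}\,\br G^{j\lam}_0\,(\der^0_jf)\,\der_\lam$, and from \eqref{Eq: 3.4} the horizontal part of $\wha\gamma$ is $\tfrac{\h\,\alp^0}{mc}\,\br\delta^\lam_0\,\der_\lam$. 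Writing $f=\wha\tau(\underline{X})=\wha\tau_\mu\underline{X}^\mu$ this gives
\begin{equation*}
\underline{X}'^{\lam}=-\tfrac{1}{c_0\alp^0}\,\br G^{j\lam}_0\,\der^0_jf+\tfrac{\h\,\alp^0}{mc}\,f\,\br\delta^\lam_0\,,
\end{equation*}
and condition $1$ is precisely $\underline{X}'^{\lam}=\underline{X}^{\lam}$.

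The decisive step is then to split $\der^0_jf=(\der^0_j\wha\tau_\mu)\,\underline{X}^\mu+\wha\tau_\mu\,\der^0_j\underline{X}^\mu$ and to establish the reconstruction identity
\begin{equation*}
-\tfrac{1}{c_0\alp^0}\,\br G^{j\lam}_0\,(\der^0_j\wha\tau_\mu)\,\underline{X}^\mu+\tfrac{\h\,\alp^0}{mc}\,f\,\br\delta^\lam_0=\underline{X}^{\lam}\,,
\end{equation*}
which asserts that the part of $\underline{X}'$ not involving derivatives of $\underline{X}$ already returns $\underline{X}$. I expect this to be the main obstacle. It is a tensorial cancellation that uses $\der^0_j\br g_{0\mu}=g_{j\mu}$ and $\der^0_j\alp^0=(\alp^0)^3\,\br g_{0j}$ (the latter from the normalization $g(\K d,\K d)=-c^2$, equivalently $\br g_{0\sigma}\,\br\delta^\sigma_0=-1/(\alp^0)^2$), together with $\br G^{j\lam}_0=\br\delta^j_\rho\,G^{\rho\lam}_0$, $g^{\rho\lam}g_{\rho\mu}=\delta^\lam_\mu$ and $\br\delta^j_\rho\,\br\delta^\rho_0=0$; the spurious pieces proportional to $\br\delta^\lam_0\,\br g_{0\mu}\underline{X}^\mu$ thrown off by the two summands must cancel exactly, leaving $\underline{X}^\lam$.

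Granting this identity, subtracting it from the displayed formula for $\underline{X}'^\lam$ leaves $\underline{X}'^{\lam}-\underline{X}^{\lam}=-\tfrac{1}{c_0\alp^0}\,\br G^{j\lam}_0\,(\wha\tau_\mu\,\der^0_j\underline{X}^\mu)$. Since $g^{\mu\lam}$ is invertible, the rows $\br G^{j\lam}_0=\tfrac{\h_0}{m}(g^{j\lam}-x^j_0g^{0\lam})$, $j=1,2,3$, are linearly independent, so $\br G^{j\lam}_0$ has rank $3$ as a matrix in $(j,\lam)$ and the right-hand side vanishes for all $\lam$ if and only if $\wha\tau_\mu\,\der^0_j\underline{X}^\mu=0$ for every $j$; via $\wha\tau_\mu\propto\br G^0_{0\mu}$ this is once more \eqref{Eq: 4.2}. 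This yields $1\Leftrightarrow 3$ and, together with the equivalence $2\Leftrightarrow 3$ established above, completes the proof.
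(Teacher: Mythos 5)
Your proposal is correct and follows essentially the same route as the paper: both compute the coordinate projection of the Hamilton--Jacobi lift and reduce condition 1 to the vanishing of the residual term $\br G^0_{0\rho}\,\br G^{j\lambda}_0\,\der^0_j\underline{X}^\rho$, and your ``reconstruction identity'' is exactly the cancellation the paper carries out in arriving at \eqref{Eq: 4.4} (I have checked that it holds with the ingredients you list). The only genuine divergence is at the last step, where you pass from that residual condition to \eqref{Eq: 4.2} by a direct rank-$3$ argument on the matrix $\br G^{j\lambda}_0$, while the paper uses the isomorphism $\nu_\tau^{-1}\com G\Fla$ applied to the $\tau$-vertical field $\widetilde X$; the two arguments are equivalent, and the same isomorphism also gives the paper its equivalence $2\Leftrightarrow 3$, which you instead verify directly in coordinates.
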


\begin{proof}
1. $\Leftrightarrow$ 3.
Let 
$$
\underline{X}= \underline{X}^{\lambda}\,\der_\lambda\,,\qquad
\underline{X}^{\lambda}\in C^\infty(\M J_1\f E)\,,
$$
be a generalized vector field.
Then we have the coordinate expression
\begin{equation}\label{Eq: 4.3}
f = \wha\tau(\underline{X}) = - c_0\,\alpha^0\, \breve{G}^0_{0\rho}\,\underline{X}^\rho\,.
\end{equation}
For the function \eqref{Eq: 4.3} we have
\begin{align*}
d(\wha\tau(\underline{X}))\Sha
 & =
\big(
\underline{X}^\lambda 
+ (\alpha^0)^2\, \breve{g}_{0\rho}\,\underline{X}^\rho\,\breve{\delta}^\lambda_0
+ \breve{G}^0_{0\rho}\,\breve{G}_0^{j\lambda}\,\der^0_j\underline{X}^\rho
\big)\,\der_\lambda
\\
&\quad
-
\br G^{i\rho}_0\,\big[
\underline{X}^\sigma\,\der_\sigma\breve{G}^0_{0\rho}
+ \breve{G}^0_{0\sigma}\,\der_\rho\underline{X}^\sigma
+ (\alpha^0)^2\,\breve{g}_{0\sigma}\, \underline{X}^\sigma\,( \breve{\delta}^{\omega}_0\der_\omega\breve{G}^{0}_{0\rho} 
-  \tfrac12 \,\der_\rho\wha{G}^{0}_{00})
\\
&\quad
+ \br G^{j\omega}_0 \, 
\breve{G}^0_{0\sigma}\,  \der^0_j\underline{X}^\sigma\,(\der_\omega\breve{G}^0_{0\rho}-
\der_\rho\breve{G}^0_{0\omega}
)
 \big]\, \der^0_i
\end{align*}
and
$$
\wha\tau(\underline{X})\, \wha\gamma = - (\alpha^0)^2 \br g_{0\rho}\, \underline{X}^\rho \,\br \delta^\lambda_0\, \der_\lambda + 
(\alpha^0)^2 \br g_{0\rho}\, \underline{X}^\rho \,\br G^{i\sigma}_0\,
(\breve{\delta}^\omega_0\,\der_\omega\breve{G}^0_{0\sig} - \tfrac 12 \der_\sigma\wha G^0_{00})\,\der^0_i\,,
$$
where 
$ \wha G^{0}_{00} = G^{0}_{\lam\nu}\,\breve{\delta}^\lambda_0 \,\breve{\delta}^\mu_0  $ (the same notation we use for $ \wha{g}_{00}= g_{\lam\nu}\,\breve{\delta}^\lambda_0 \,\breve{\delta}^\mu_0   $).
So we obtain
\begin{align}\label{Eq: 4.4}
X 
&= 
d(\wha\tau(\underline{X}))\Sha +
\wha\tau(\underline{X})\, \wha\gamma
=
\big(
\underline{X}^\lambda 
+ \breve{G}^0_{0\rho}\,\breve{G}_0^{j\lambda}\,\der^0_j\underline{X}^\rho
\big)\,\der_\lambda
\\
&\quad\nonumber
-
\br G^{i\rho}_0\,\big[
\underline{X}^\sigma\,\der_\sigma\breve{G}^0_{0\rho}
+ \breve{G}^0_{0\sigma}\,\der_\rho\underline{X}^\sigma
+ \breve{G}^0_{0\sigma}\, \br G^{j\omega}_0\,\der_j^0\underline{X}^\sigma(\der_\omega\breve{G}^0_{0\rho}-
\der_\rho\breve{G}^0_{0\omega}
)
 \big]\, \der^0_i
\end{align}
and this vector field projects on $ \underline{X} $ if and only if the condition 
\begin{equation}\label{Eq: 4.5}
\breve{G}^0_{0\rho}\,\breve{G}_0^{j\lambda}\,\der^0_j\underline{X}^\rho = 0\,
\end{equation}
is satisfied.

Further let us consider the generalized vector field
$ \widetilde X:\M J_1\f E\to T\f E $ with the coordinate expression
$$
\widetilde X  =
\breve{G}^0_{0\rho}\,\breve{G}_0^{j\lambda}\,\der^0_j\underline{X}^\rho
\,\der_\lambda\,.
$$
This generalized vector field is $ \wha\tau $-vertical, i.e.
$\widetilde X: \M J_1\f E \to V_\tau\f E  $,
 and if we use the isomorphism $ G\Fla: T\f E\to \B T \ten T^{*}\f E $ we obtain
the $\B T $-valued 1-form
\begin{equation*}
\widetilde X\Fla 
=
\breve{G}^0_{0\rho}\,\breve{\delta}_\mu^{j}\,\der^0_j\underline{X}^\rho
\,u_0\ten d^{\mu}\,,
\end{equation*}
$ u_0 $ is a base of $ \B T $,  such that $ \widetilde X\Fla (\wha{\K{d}}) = 0 $, i.e.
$\widetilde X\Fla:\M J_1\f E\to \B{T}\ten {V}^*_\tau\f E $.
Further, if we use the isomorphism 
$
\nu_\tau^{-1} : \B{T}\ten {V}^*_\tau\f E\to V^*\M J_1\f E
$,
we obtain from \eqref{Eq: 3.3}
$$
\nu_\tau^{-1}(\widetilde X\Fla) 
=
c_0\,\alpha^0\,\breve{G}^0_{0\rho}\,\der^0_j\underline{X}^\rho
\,d^j_0\,.
$$
Since $\nu_\tau^{-1}\com G\Fla:{V}_\tau\f E\to V^*\M J_1\f E$
is an isomorphism we obtain
that $\nu_\tau^{-1}(\widetilde X\Fla) = 0$ if and only if $\widetilde X =0$,
i.e. \eqref{Eq: 4.2} is satisfied if and only if \eqref{Eq: 4.5}
is satisfied. 

\smallskip
2. $\Leftrightarrow$ 3.
It follows from 
$
\nu_\tau^{-1}(\widetilde X\Fla) = - \wha\tau(V\underline{X})\,.
$
\end{proof}

If the vector field \eqref{Eq: 4.1} projects on $\underline{X}$ then we get the coordinate expression
\begin{align}\label{Eq: 4.6}
X = d(\wha\tau(\underline{X}))\Sha +
\wha\tau(\underline{X})\, \wha\gamma
& =
\underline{X}^\lambda 
\,\der_\lambda
-
\br G^{i\rho}_0\,\big[
\underline{X}^\sigma\,\der_\sigma\breve{G}^0_{0\rho}
+ \breve{G}^0_{0\sigma}\,\der_\rho\underline{X}^\sigma
 \big]\, \der^0_i\,.
\end{align}

Next, let us characterize conditions for the phase function $  \wha\tau(\underline{X}) $  to be conserved.

\begin{thm}\label{Th: 4.2}
For a generalized vector field $\underline{X}$ the phase function $ \wha\tau(\underline{X}) $  is conserved if and only if the condition
\begin{equation}\label{Eq: 4.7}
0 
=
- \br\delta^\omega_0\,\br\delta^\rho_0\,\big(
G^0_{\omega\sigma}\, \der_\rho \underline{X}^\sigma
+ \tfrac 12 \underline{X}^\sigma\, \der_\sigma G^0_{\omega\rho}
\big)
+
\breve{G}^0_{0\omega}\,\breve{G}_0^{j\rho}\,\der^0_j\underline{X}^\omega\,\big(
\br\delta^\sigma_0\,\der_\sigma\breve{G}^0_{0\rho}
- \tfrac 12 \, \der_\rho\wha G^0_{00}
\big)
\end{equation}
is satisfied.
\end{thm}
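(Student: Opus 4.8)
The plan is to turn the conservation condition into a single coordinate computation and match it, term by term, against \eqref{Eq: 4.7}. Since $\wha\gam = \tfrac{\h}{mc^2}\,\gam[g]$ and, by \eqref{Eq: 3.4}, $\gam[g] = c\,\alpha^0\,(\br\delta^\mu_0\der_\mu + \gam\ga i\,\der^0_i)$, the condition $\wha\gam.(\wha\tau(\underline{X}))=0$ is, after discarding the nowhere-vanishing prefactor $\tfrac{\h}{mc^2}\,c\,\alpha^0$, equivalent to the vanishing of the single scalar
\[
\br\delta^\mu_0\,\der_\mu f + \gam\ga i\,\der^0_i f, \qquad f = -\,c_0\,\alpha^0\,\br{G}^0_{0\rho}\,\underline{X}^\rho,
\]
with $f$ read off from \eqref{Eq: 4.3}. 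So the whole theorem reduces to expanding this expression by the chain rule and showing it is a nonzero multiple of the right-hand side of \eqref{Eq: 4.7}.

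First I would record the elementary derivative identities that feed the chain rule: $\der^0_i\br\delta^\mu_0 = \delta^\mu_i$, $\der^0_i\br{G}^0_{0\rho}=G^0_{i\rho}$, $\der^0_i\alpha^0 = (\alpha^0)^3\,\br{g}_{0i}$ and $\der_\mu\alpha^0 = \tfrac12(\alpha^0)^3\,\der_\mu\wha{g}_{00}$, together with the normalization $(\alpha^0)^2\,\wha{g}_{00}=-1$ that comes from $g(\K d,\K d)=-c^2$. I would also rewrite the Levi-Civita coefficient in metric form: using the symmetry reductions $\br\delta^\sigma_0\br\delta^\tau_0\der_\sigma g_{\epsilon\tau}=\br\delta^\sigma_0\der_\sigma\br{g}_{0\epsilon}$ and $\br\delta^\sigma_0\br\delta^\tau_0\der_\epsilon g_{\sigma\tau}=\der_\epsilon\wha{g}_{00}$ one finds $\gam\ga i = -\,\br{G}^{i\epsilon}_0\,(\br\delta^\sigma_0\der_\sigma\br{G}^0_{0\epsilon} - \tfrac12\der_\epsilon\wha{G}^0_{00})$, i.e. $\br{G}^{j\rho}_0(\br\delta^\sigma_0\der_\sigma\br{G}^0_{0\rho}-\tfrac12\der_\rho\wha{G}^0_{00}) = -\,\gam\ga j$. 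This is exactly the tensor sitting inside the second term of \eqref{Eq: 4.7}.

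Next I would substitute and split the result according to whether $\underline{X}$ is differentiated. The differentiated pieces are of two kinds: $\br{G}^0_{0\rho}\,\br\delta^\mu_0\der_\mu\underline{X}^\rho$ from the $\br\delta^\mu_0\der_\mu f$ part, and $\br{G}^0_{0\rho}\,\gam\ga i\,\der^0_i\underline{X}^\rho$ from the $\gam\ga i\,\der^0_i f$ part. Using $\br\delta^\omega_0\,G^0_{\omega\sigma}=\br{G}^0_{0\sigma}$ the first reproduces, up to the common factor $c_0\alpha^0$, the term $-\br\delta^\omega_0\br\delta^\rho_0 G^0_{\omega\sigma}\der_\rho\underline{X}^\sigma$ of \eqref{Eq: 4.7}, and, once $\gam\ga i$ is written in its metric form, the second reproduces $\br{G}^0_{0\omega}\br{G}^{j\rho}_0\der^0_j\underline{X}^\omega(\br\delta^\sigma_0\der_\sigma\br{G}^0_{0\rho}-\tfrac12\der_\rho\wha{G}^0_{00})$ with the correct sign — both terms end up with the same overall factor precisely because $\gam\ga i$ carries the Levi-Civita sign.

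The hard part will be the remaining, undifferentiated-$\underline{X}$ terms: I must show they collapse to $-\tfrac12\br\delta^\omega_0\br\delta^\rho_0\,\underline{X}^\sigma\der_\sigma G^0_{\omega\rho} = -\tfrac12\,\underline{X}^\sigma\der_\sigma\wha{G}^0_{00}$. This is where the $\alpha^0$-bookkeeping and the time/space splitting hidden in $\br\delta^i_\kappa$ become delicate. I expect the two identities that do the work to be $g^{\epsilon\kappa}\br{g}_{0\kappa}=\br\delta^\epsilon_0$ (the raised $\br{g}_0$ is simply $\br\delta_0$) and $(\alpha^0)^2\wha{g}_{00}=-1$: with them the four undifferentiated contributions — from $\der_\mu\alpha^0$, from $\der^0_i\alpha^0$ paired with $\gam\ga i\br{g}_{0i}$, and from $\gam\ga i g_{i\rho}$ — should telescope to $-\,\alpha^0\,W_\rho$ with $W_\rho = \br\delta^\sigma_0\der_\sigma\br{g}_{0\rho}-\tfrac12\der_\rho\wha{g}_{00}$, whereupon the $\br\delta^\sigma_0\der_\sigma\br{g}_{0\rho}$ part cancels against the surviving $\alpha^0\,\br\delta^\mu_0\der_\mu\br{g}_{0\rho}$ term and leaves exactly $\tfrac12\alpha^0\der_\rho\wha{g}_{00}$. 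Assembling the three groups then yields $\wha\gam.(\wha\tau(\underline{X})) = c\,c_0(\alpha^0)^2\times(\text{left-hand side of }\eqref{Eq: 4.7})$, and since this prefactor never vanishes the stated equivalence follows. Everything apart from this telescoping collapse is routine chain rule, so I would concentrate the written proof on that step.
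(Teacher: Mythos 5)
Your proposal is correct and follows essentially the same route as the paper: the paper's proof likewise writes $\wha\gamma$ in the coordinate form $\tfrac{\h}{mc}\,\alpha^0\,\br\delta^\rho_0(\der_\rho + \Gam[g]_\rho{}^i_0\,\der^0_i)$, applies it to $f=\wha\tau(\underline{X})=-c_0\alpha^0\br G^0_{0\rho}\underline{X}^\rho$ from \eqref{Eq: 4.3}, and identifies the result as a nowhere-vanishing multiple of the right-hand side of \eqref{Eq: 4.7}. The only difference is presentational: the paper states the outcome of the expansion directly, whereas you spell out the chain-rule bookkeeping (the derivative identities for $\alpha^0$ and $\br G^0_{0\rho}$, the metric form of $\gam\ga i$, and the collapse of the undifferentiated terms to $-\tfrac12\underline{X}^\sigma\der_\sigma\wha G^0_{00}$), all of which checks out.
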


\begin{proof}
Let us consider the condition  $\wha\gamma.f = 0$ which is equivalent with
$$
\tfrac{\h}{m\,c} \alpha^0\br \delta^{\rho}_{0}\,(\der_{\rho}f + \Gamma[g]_{\rho}{}^{i}_{0}\, \der^0_if) 
=
0\,.
$$
Then for the function \eqref{Eq: 4.3} we obtain
\begin{align*}
\wha\gamma.(\wha\tau(\underline{X}))
& =
- (\alpha^0)^2\br\delta^\omega_0\,\br\delta^\rho_0\,\big(
g_{\omega\sigma}\, \der_\rho \underline{X}^\sigma
+ \tfrac 12 \underline{X}^\sigma\, \der_\sigma g_{\omega\rho}
\big)
\\ &\quad 
+
(\alpha^0)^2\breve{g}_{0\omega}\,\breve{G}_0^{j\rho}\,\der^0_j\underline{X}^\omega\,\big(
\br\delta^\sigma_0\,\der_\sigma\breve{G}^0_{0\rho}
- \tfrac 12 \, \der_\rho\wha G^0_{00}
\big)
\end{align*}
which vanishes if and only if \eqref{Eq: 4.7} is satisfied.
\hfill{ }
\end{proof}

\begin{thm}\label{Th: 4.3}
Let $\underline{X}$ be
a generalized vector field 
satisfying the projectability condition. Then the following assertions are equivalent:

1. The Hamilton-Jacobi lift $X=d(\wha\tau(\underline{X}))\Sha + \wha\tau(\underline{X})\, \wha\gamma$ is an infinitesimal symmetry of the gravitational contact phase structure.

2. The phase function $ \wha\tau(\underline{X}) $ is conserved.

3. The vector field $ [\wha\gamma,X] $ is in $ \ker\wha\tau $.

4. In coordinates
 \begin{align}\label{Eq: 4.8}
0 
& = \br\delta^\omega_0\,\br\delta^\rho_0\,\big(
g_{\omega\sigma}\, \der_\rho \underline{X}^\sigma
+ \tfrac 12 \underline{X}^\sigma\, \der_\sigma g_{\omega\rho}
\big)
=
\br g_{0\sigma}\,\br\delta^\rho_0\, \der_\rho \underline{X}^\sigma
+ \tfrac 12 \underline{X}^\sigma\, \der_\sigma \wha g_{00}
\,.
\end{align}
\end{thm}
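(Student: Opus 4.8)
The plan is to run the equivalences through the abstract contact--Jacobi formalism of Section~2, keeping coordinates only for the last step. Throughout I identify the gravitational contact pair as $\omega = -\wha\tau$ with $\Omega[g] = d\omega = -d\wha\tau$, and its dual Reeb field as $E = -\wha\gamma$, so that $i_{\wha\gamma}\wha\tau = 1$, $i_{\wha\gamma}\Omega[g] = 0$ and $i_{\wha\tau}\Lambda = 0$ (the fundamental relations \eqref{Eq: 2.3}).

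The equivalence $1\Leftrightarrow 2$ is read off from Corollary~\ref{Cr: 2.3}(2): a vector field is an infinitesimal symmetry of the contact structure exactly when it is the Hamilton--Jacobi lift $df\Sha - f\,E = df\Sha + f\,\wha\gamma$ of a function $f$ with $E.f = -\wha\gamma.f = 0$. Since $f = \wha\tau(\underline X)$, this is precisely the conservation condition $\wha\gamma.(\wha\tau(\underline X)) = 0$, so $1$ and $2$ say the same thing.

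For $2\Leftrightarrow 3$ I work invariantly with the Hamilton--Jacobi lift $X = df\Sha + f\,\wha\gamma$ of any $f$. Two preliminary facts suffice: first $\wha\tau(X) = f$, because $\wha\tau(df\Sha) = \Lambda(\wha\tau, df) = 0$ by $i_{\wha\tau}\Lambda = 0$ while $\wha\tau(\wha\gamma) = 1$; and second $L_{\wha\gamma}\wha\tau = i_{\wha\gamma}d\wha\tau + d\,i_{\wha\gamma}\wha\tau = -\,i_{\wha\gamma}\Omega[g] = 0$. Plugging these into the contraction identity $i_{[\wha\gamma,X]}\wha\tau = \wha\gamma.(i_X\wha\tau) - i_X(L_{\wha\gamma}\wha\tau)$ yields the clean formula $\wha\tau([\wha\gamma,X]) = \wha\gamma.f$. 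Hence $[\wha\gamma,X]$ lies in $\ker\wha\tau$ if and only if $f = \wha\tau(\underline X)$ is conserved, which is $2\Leftrightarrow 3$; note this step uses neither projectability nor coordinates.

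Finally $2\Leftrightarrow 4$ specialises Theorem~\ref{Th: 4.2}. That theorem (more precisely the displayed expression for $\wha\gamma.(\wha\tau(\underline X))$ in its proof) writes conservation as the vanishing of a sum of two terms. The second term carries the factor $\breve G^0_{0\omega}\,\der^0_j\underline X^\omega$ (equivalently $\br g_{0\omega}\,\der^0_j\underline X^\omega$, the two differing by a nonzero scalar), which is annihilated by the projectability condition \eqref{Eq: 4.2}; only the first term survives, and after stripping the nonzero factor $(\alpha^0)^2$ and using $G^0_{\omega\sigma} = \tfrac{m}{\h_0}\,g_{\omega\sigma}$ it becomes exactly the first expression in \eqref{Eq: 4.8}. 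The second form in \eqref{Eq: 4.8} then follows from the contractions $\br g_{0\sigma} = g_{\omega\sigma}\,\br\delta^\omega_0$ and $\wha g_{00} = g_{\lambda\mu}\,\br\delta^\lambda_0\,\br\delta^\mu_0$ of Note~\ref{Nt: 3.1}, together with the fact that the spacetime derivatives $\der_\sigma$ do not act on the $\br\delta^\lambda_0$ (which depend only on the fibre coordinates $x^i_0$). I expect no genuine obstacle here: the conceptual content is the invariant identity $\wha\tau([\wha\gamma,X]) = \wha\gamma.f$, and the remaining difficulty is the bookkeeping that confirms the vertical-derivative term cancels under projectability and that the scalings match.
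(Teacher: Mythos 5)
Your proof is correct, and its overall architecture ($1\Leftrightarrow 2$ via Corollary \ref{Cr: 2.3}, then tying $2$, $3$, $4$ together through the identity $\wha\gamma.(\wha\tau(\underline{X})) = \wha\tau([\wha\gamma,X])$ and the reduction of Theorem \ref{Th: 4.2} under projectability) matches the paper's. The one genuine difference is how you establish that identity: the paper simply asserts it as the coordinate formula \eqref{Eq: 4.9}, obtained by computing both sides in the adapted chart, whereas you derive $\wha\tau([\wha\gamma,X]) = \wha\gamma.f$ invariantly from the Cartan-calculus relation $i_{[\wha\gamma,X]}\wha\tau = L_{\wha\gamma}(i_X\wha\tau) - i_X(L_{\wha\gamma}\wha\tau)$ together with the duality relations $i_{\wha\tau}\Lambda = 0$, $\wha\tau(\wha\gamma)=1$ and $i_{\wha\gamma}\Omega[g]=0$ (which give $\wha\tau(X)=f$ and $L_{\wha\gamma}\wha\tau=0$). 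Your route is cleaner and more illuminating: it makes transparent that $2\Leftrightarrow 3$ is a general fact about Hamilton--Jacobi lifts on any contact pair, independent of projectability and of the specific gravitational structure, while the paper's coordinate identity leaves the mechanism hidden. What the paper's computation buys in exchange is the explicit middle expression in \eqref{Eq: 4.9}, which is exactly the coordinate form \eqref{Eq: 4.8} and so delivers $3\Leftrightarrow 4$ in the same breath; in your version the coordinate form has to be extracted separately from Theorem \ref{Th: 4.2}, which you do correctly, including the observation that the projectability condition \eqref{Eq: 4.2} kills the vertical-derivative term and that $\der_\sigma$ commutes with the $\br\delta{}^\lambda_0$.
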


\begin{proof}
1. $ \Leftrightarrow $ 2. It follows from the fact that the Hamilton-Jacobi lift of conserved functions are infinitesimal symmetries of the contact structure.

2. $ \Leftrightarrow $ 3. For a generalized vector field $ \underline{X} $ satisfying the projectability condition we have
\begin{equation}\label{Eq: 4.9}
\wha\gamma.(\wha\tau(\underline{X})) = - (\alpha^0)^2\big(\br g_{0\sigma}\,\br\delta^\rho_0\, \der_\rho \underline{X}^\sigma
+ \tfrac 12 \underline{X}^\sigma\, \der_\sigma \wha g_{00}\big)
=  \wha\tau([\wha\gamma,X]).
\end{equation}

3. $ \Leftrightarrow $ 4.
 It follows from the coordinate expression
 \eqref{Eq: 4.9}.
\hfill{ }
\end{proof}

\begin{Example}\label{Ex: 4.1}
{\rm Let us suppose that a phase vector field $X$ is projectable on a vector field on $\f E$, i.e. $\underline{X}^\lambda\in C^\infty(\f E)$.
Then \eqref{Eq: 4.2} is satisfied identically and \eqref{Eq: 4.8} is equivalent with
$L_{\underline{X}} g = 0$. In this case the Hamilton-Jacobi lift \eqref{Eq: 4.1} coincides with the flow lift $\M J_1\underline{X}$
and projectable infinitesimal symmetries of the gravitational contact phase structure are the flow 1-jet lifts of Killing vector fields which is the result of \cite{JanVit12}.
}
\hfill\END\end{Example}

So, we can summarise.

\begin{thm}\label{Th: 4.4}
All infinitesimal symmetries of the gravitational contact phase structure are the Hamilton-Jacobi lifts of phase functions  
$ \wha\tau(\underline{X}) $, where generalized vector fields
$ \underline{X} $ satisfy the projectability and the conservation conditions.    

Moreover, if 
$ \underline{X} $
factorises through a spacetime vector field, then the corresponding infinitesimal symmetry is projectable. If $ \underline{X} $ is a generalized vector field which is not factorisable through a spacetime vector field, then the corresponding infinitesimal symmetry is hidden.
\hfill\END
\end{thm}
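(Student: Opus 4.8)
The plan is to assemble Theorem \ref{Th: 4.4} as a synthesis of the preceding results, since essentially all of the analytic content is already contained in Corollary \ref{Cr: 2.3} and Theorems \ref{Th: 4.1}--\ref{Th: 4.3}. What remains is to check that the correspondence $X \leftrightarrow \underline{X}$ is well posed in both directions and then to read off the projectable/hidden dichotomy from Example \ref{Ex: 4.1}.

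First I would fix the identification of structures: for the gravitational contact pair the fundamental form is $\omega = -\wha\tau$ and the Reeb field is $E = -\wha\gamma$, so that Corollary \ref{Cr: 2.3}(2) applies verbatim and every infinitesimal symmetry of the gravitational contact phase structure is of the form $X = df\Sha + f\,\wha\gamma$ with $\wha\gamma.f = 0$. The first task is to show $f = \wha\tau(X)$: applying $\wha\tau$ to the Hamilton--Jacobi expression and using the duality relations $i_{\wha\tau}\Lam = 0$ (whence $\wha\tau(df\Sha)=0$) and $\wha\gamma\con\wha\tau = 1$ gives $\wha\tau(X) = f$. Since $\wha\tau$ is horizontal over $\f E$, it factors through $T\pi^1_0$, so $\wha\tau(X) = \wha\tau(\underline{X})$ with $\underline{X} := T\pi^1_0(X)$; this reproduces \eqref{Eq: 4.1}.

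Next I would establish the bijective correspondence. Starting from an infinitesimal symmetry $X$ and setting $\underline{X} = T\pi^1_0(X)$, the generalized vector field $\underline{X}$ satisfies the projectability condition of Theorem \ref{Th: 4.1} automatically --- indeed $X$ projects on $\underline{X}$ by the very definition of $\underline{X}$ --- and it satisfies the conservation condition of Theorem \ref{Th: 4.2} because $\wha\gamma.f = \wha\gamma.(\wha\tau(\underline{X})) = 0$. Conversely, given any $\underline{X}$ meeting both conditions, the Hamilton--Jacobi lift $X = d(\wha\tau(\underline{X}))\Sha + \wha\tau(\underline{X})\,\wha\gamma$ is an infinitesimal symmetry of the contact structure by Corollary \ref{Cr: 2.3}(2) (conservation supplies $\wha\gamma.(\wha\tau(\underline{X}))=0$), and by Theorem \ref{Th: 4.1} together with the reduced coordinate form \eqref{Eq: 4.6} it projects precisely on $\underline{X}$, so the round trip $\underline{X}\mapsto X\mapsto T\pi^1_0(X)$ is the identity. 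This yields the first assertion: infinitesimal symmetries are exactly the Hamilton--Jacobi lifts of the phase functions $\wha\tau(\underline{X})$ with $\underline{X}$ satisfying the two conditions.

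Finally I would read off the dichotomy from Example \ref{Ex: 4.1}. If $\underline{X}$ factorises through a spacetime vector field, i.e. $\underline{X}^\lambda \in C^\infty(\f E)$, then \eqref{Eq: 4.2} holds identically, the conservation condition \eqref{Eq: 4.8} reduces to the Killing equation $L_{\underline{X}}g = 0$, and the Hamilton--Jacobi lift coincides with the flow jet lift $\M J_1\underline{X}$, which is projectable --- recovering the result of \cite{JanVit12}. If instead $\underline{X}$ genuinely depends on the fibre coordinates $x^i_0$ and is not factorisable through $\f E$, then the associated symmetry cannot project onto a spacetime field and is by definition hidden. I do not expect a serious obstacle here, as the theorem is a summary; the one point requiring care is the consistency of the correspondence --- specifically, confirming via \eqref{Eq: 4.6} that applying $T\pi^1_0$ to the lift of $\wha\tau(\underline{X})$ returns $\underline{X}$ itself exactly when the projectability condition is imposed, so that projectability and conservation are genuinely independent constraints rather than a single entangled one.
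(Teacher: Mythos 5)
Your proposal is correct and matches the paper's route: the paper gives no separate proof of Theorem \ref{Th: 4.4}, presenting it (after the words ``So, we can summarise'') as the direct synthesis of Corollary \ref{Cr: 2.3}, Theorems \ref{Th: 4.1}--\ref{Th: 4.3} and Example \ref{Ex: 4.1}, which is exactly the assembly you carry out. Your explicit verification that $f=\wha\tau(X)=\wha\tau(\underline{X})$ and that the round trip $\underline{X}\mapsto X\mapsto T\pi^1_0(X)$ is the identity just makes precise what the paper leaves implicit.
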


\begin{crl}\label{Cr: 4.5}
According to the above Theorem \ref{Th: 4.4} to give hidden symmetries of the gravitational contact phase structures we have to
find functions $ \underline{X}^{\lambda}\in C^\infty(\M J_1\f E) $ 
satisfying the conditions \eqref{Eq: 4.2} and \eqref{Eq: 4.8}.
\hfill\END
\end{crl}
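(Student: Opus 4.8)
The plan is to read off the coordinate characterization directly from Theorem~\ref{Th: 4.4} together with the coordinate forms of the two defining conditions already established in Theorems~\ref{Th: 4.1} and~\ref{Th: 4.3}. The corollary is an immediate consequence, so the proof is an identification of known conditions rather than a new computation, and I expect no genuine obstacle in it.

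First I would invoke Theorem~\ref{Th: 4.4}: every infinitesimal symmetry of the gravitational contact phase structure is the Hamilton--Jacobi lift \eqref{Eq: 4.1} of the phase function $\wha\tau(\underline{X})$ attached to a generalized vector field $\underline{X}:\M J_1\f E\to T\f E$ satisfying the projectability and the conservation conditions, and it is hidden exactly when $\underline{X}$ does not factorize through a spacetime vector field. Hence the search for hidden symmetries reduces to exhibiting component functions $\underline{X}^\lambda\in C^\infty(\M J_1\f E)$ that fulfil both conditions while failing to be projectable onto $\f E$.

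Next I would rewrite each abstract condition in coordinates using the equivalences already proved. By Theorem~\ref{Th: 4.1} (equivalence $1\Leftrightarrow 3$) the projectability of $X$ is equivalent to $\br{G}^0_{0\rho}\,\der^0_j\underline{X}^\rho=0$, namely \eqref{Eq: 4.2}; granting this, Theorem~\ref{Th: 4.3} (equivalence $2\Leftrightarrow 4$) shows the conservation condition $\wha\gamma.(\wha\tau(\underline{X}))=0$ is equivalent to $\br g_{0\sigma}\,\br\delta^\rho_0\,\der_\rho\underline{X}^\sigma+\tfrac12\,\underline{X}^\sigma\,\der_\sigma\wha g_{00}=0$, namely \eqref{Eq: 4.8}. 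Combining the two, an infinitesimal symmetry is determined precisely by a solution of the coupled system \eqref{Eq: 4.2}--\eqref{Eq: 4.8}, as claimed. The real difficulty is deferred to the sequel: actually solving this system for non-projectable $\underline{X}$, where Killing multivector fields enter and the substantive analysis lies.
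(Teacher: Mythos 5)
Your proposal is correct and follows exactly the route the paper intends: the corollary is stated without proof precisely because it is the immediate combination of Theorem~\ref{Th: 4.4} with the coordinate equivalences $1\Leftrightarrow 3$ of Theorem~\ref{Th: 4.1} and $2\Leftrightarrow 4$ of Theorem~\ref{Th: 4.3}. Nothing is missing.
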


\subsection{Hidden symmetries and Killing tensor fields}

In this section we give examples of nonprojectable, i.e. hidden, symmetries of the gravitational contact phase structure. Such symmetries are generated by Killing $k$-vector fields, $k>1$.

 By Corollary \ref{Cr: 2.3} constant multiples of the Reeb vector field are hidden symmetries of the gravitational contact phase structure. The Reeb vector field of the gravitational contact phase structure is $- \wha\gamma $ which implies that constant multiples of the generalized vector field 
 $\wha{\K{d}} $ satisfy the projectability and the conservation conditions. But 
 $$
\wha{\K{d}} = - \wha\tau\con \wha{\ba G} 
= c_0\, \alp^0\, \breve{G}^0_{0\rho}\,\wha{{G}}^{\rho\lambda}\,\der_\lambda
= \tfrac{\h\, \alpha^0}{m\, c}\,\breve{G}^0_{0\rho}\, {{G}}^{\rho\lambda}_0\,\der_\lambda\,,  
$$
where $\wha{\bar{G}} = \wha G^{\lambda\mu}\, \der_\lambda\ten\der_\mu$, $  \wha G^{\lambda\mu}= \tfrac{\h}{m\,c\,c_0}\, G^{\lambda\mu}_0 $, is the unscaled contravariant metric.  
 The corresponding phase function is then
$-\wha{\ba G}(\wha\tau,\wha\tau) =  1 $. Now, we generalize this situation. We consider  a symmetric $ k $-vector field $ \overset{k}{K}=\overset{k}{K}{}^{\lambda_1\dots \lam_k}\, \der_{\lambda_1}\ten \dots \ten \der_{\lambda_k} $, $k\ge 1$, on $\f E$
and the induced phase function
\begin{equation}\label{Eq: 4.10}
\overset{k}{K}(\wha\tau):=\overset{k}{K}(\wha\tau,\dots,\wha\tau)
=
(-1)^{k} (c_0\, \alpha^0)^{k}\,  \breve{G}^{0}_{0\underline{\rho}_{k}}\, \overset{k}{K}{}^{\underline{\rho}_{k}} \,,
\end{equation}
where
$\underline{\rho}_{k}= \rho_{1}\dots\rho_{k}$ is the multiindex and we denote by
$\breve{G}^0_{0\underline{\rho}_{k}}= \breve{G}^0_{0\rho_{1}}\,\dots\breve{G}^0_{0\rho_{k}}$.
Then, if we consider the Hamiltonian-Jacobi lift 
\begin{equation}\label{Eq: 4.11}
X[\overset{k}{K}] = d \big(\overset{k}{K}(\wha\tau)\big)\Sha + \overset{k}{K}(\wha\tau)\, \wha\gamma\,,
\end{equation}
we obtain the coordinate expression of its projection $\underline{X}[\overset{k}{K}] = T\pi^1_0(X[\overset{k}{K}])$ in the form
\begin{align*}
\underline{X}[\overset{k}{K}]
& = 
(-1)^{k-1} (c_0\, \alpha^0)^{k-1}\, \big[k\,\br G^0_{0\underline{\rho}_{k-1}}\, \overset{k}{K}{}^{\underline{\rho}_{k-1}\lambda}
+ {(k-1)} \tfrac{\h_0}{m} (\alpha^0)^2 \br G^{0}_{0\underline{\rho}_{k}}\, \overset{k}{K}{}^{\underline{\rho}_{k}}\,\br\delta^\lambda_0
\big]\,\der_\lambda
\,
\end{align*}
which is the coordinate expression of the  generalized vector field
\begin{equation}\label{Eq: 4.13}
\underline{X}[\overset{k}{K}] = k\, \underset{(k-1)-times}{\underbrace{\wha\tau\con \dots\con \wha\tau \con}}\overset{k}{K} - {(k-1)}\, \overset{k}{K}(\wha\tau)\, \wha{\K{d}}: \M J_1\f E\to T\f E\,.
\end{equation}
It is easy to see that \eqref{Eq: 4.13} satisfies the projectability condition since $\wha\tau(\underline{X}[\overset{k}{K}]) = \overset{k}{K}(\wha\tau)$.

Now, it is sufficient to find conditions for phase functions $\overset{k}{K}(\wha\tau)$ to be conserved, i.e. $\wha\gamma.\overset{k}{K}(\wha\tau) = 0$.

\begin{lem}\label{Lm: 4.6}
The equation $\overset{k}{K}(\wha\tau) =0$, $k\ge 1$, is satisfied if and only if $ \overset{k}{K} \equiv 0 $. 
\end{lem}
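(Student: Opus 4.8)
The \emph{if} implication is trivial, so the content is the converse, and the plan is local: I fix a point $e\in\f E$, regard the restriction of $\overset{k}{K}(\wha\tau)$ to the fibre of $\M J_1\f E$ over $e$ as a function of the velocities $x^i_0$, and show that its identical vanishing forces every component of $\overset{k}{K}$ at $e$ to vanish. First I would use the coordinate expression \eqref{Eq: 4.10}: since $c_0\,\alpha^0$ is nowhere zero, the condition $\overset{k}{K}(\wha\tau)=0$ on the fibre is equivalent to $\br G^0_{0\underline{\rho}_{k}}\,\overset{k}{K}{}^{\underline{\rho}_{k}}=0$. Lowering all indices with $g$ and using that $\br G^0_{0\rho}$ is a nonzero constant multiple of $g_{\mu\rho}\,\br\delta^\mu_0$ (Note \ref{Nt: 3.1}), this reads
\[
\overset{k}{K}_{\mu_1\dots\mu_k}\,\br\delta^{\mu_1}_0\cdots\br\delta^{\mu_k}_0=0,
\qquad \br\delta^\mu_0=\delta^\mu_0+\delta^\mu_p\,x^p_0,
\]
where $\overset{k}{K}_{\mu_1\dots\mu_k}$ is the fully covariant (symmetric) form of $\overset{k}{K}$.

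The left-hand side is a polynomial of degree $k$ in the $x^i_0$, and the key step is to read off its coefficients. Expanding the product and invoking the symmetry of $\overset{k}{K}$, the coefficient of the monomial $x^{i_1}_0\cdots x^{i_r}_0$ is, up to a positive combinatorial factor, the component $\overset{k}{K}_{0\cdots0\,i_1\dots i_r}$ (with $k-r$ zero indices). Because the admissible velocities fill an open subset of $\Rn^3$ — this is precisely the description of $\M J_1\f E$ recalled in Section \ref{E: Phase space} — a polynomial vanishing there vanishes identically, so every coefficient is zero; and as $r$ and the spatial multi-indices range over all values, these coefficients exhaust \emph{all} independent components of the symmetric tensor $\overset{k}{K}_{\mu_1\dots\mu_k}$. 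Hence $\overset{k}{K}(e)=0$, and since $e$ is arbitrary, $\overset{k}{K}\equiv 0$.

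The only delicate point is the openness invoked above: one must be sure that the velocities genuinely sweep a full-dimensional set, for otherwise the coefficient-matching would be unjustified. A coordinate-free way to see the same thing, which I would include as a cross-check, is to note that $\overset{k}{K}(\wha\tau)$ is the degree-$k$ homogeneous polynomial attached to $\overset{k}{K}(e)$ evaluated on the covector $\wha\tau$; as $x^i_0$ varies, the contact map $\K d$ traces an open piece of the future hyperboloid $g(v,v)=-c^2$, so $\wha\tau=-\tfrac{1}{c^2}\,\tfrac{mc^2}{\hbar}\,g\Fla(\K d)$ traces an open piece of a hyperboloid in $T^*_e\f E$, and by homogeneity the polynomial then vanishes on the whole open cone it spans. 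A polynomial vanishing on a nonempty open set is identically zero, and by polarization the symmetric tensor behind it must vanish, yielding the same conclusion.
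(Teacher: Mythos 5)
Your proof is correct and follows essentially the same route as the paper's: reduce $\overset{k}{K}(\wha\tau)=0$ to the vanishing of the polynomial $\overset{k}{K}{}^{\lambda_1\dots\lambda_k}g_{\rho_1\lambda_1}\cdots g_{\rho_k\lambda_k}\,\br\delta^{\rho_1}_0\cdots\br\delta^{\rho_k}_0$ in the fibre coordinates $x^i_0$, conclude that its spacetime-function coefficients all vanish, and then use regularity of $g$ to recover $\overset{k}{K}\equiv 0$. Your explicit identification of the coefficients with the components $\overset{k}{K}_{0\cdots 0\,i_1\dots i_r}$ and the remark that the fibre is a nonempty open subset of $\Rn^3$ merely spell out the step the paper leaves implicit.
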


\begin{proof}
We have 
$$
\overset{k}{K}(\wha\tau) = (-c_0\,\alpha^{0})^{k} \, \overset{k}{K}{}^{\underline{\lambda}_k}\,
\breve{G}^{0}_{0\underline{\lam}_k}
= \big(-\tfrac{m\,c_0\,\alpha^{0}}{\h_0}\big)^{k} \, \overset{k}{K}{}^{\lambda_1\dots\lambda_k}\,
{g}_{\rho_1\lam_1}\dots {g}_{\rho_k\lam_k}\,\breve{\delta}^{\rho_1}_0\dots\breve{\delta}^{\rho_k}_0
\,.
$$
Then $\overset{k}{K}(\wha\tau) =0$ if and only if
$$
\overset{k}{K}{}^{\lambda_1\dots\lambda_k}\,
{g}_{\rho_1\lam_1}\dots {g}_{\rho_k\lam_k}\,\breve{\delta}^{\rho_1}_0\dots\breve{\delta}^{\rho_k}_0 = 0\,.
$$ 
The above function is polynomial on fibres of the phase space with coefficients to be
spacetime functions so it vanishes if and only if $\overset{k}{K}{}^{\lambda_1\dots\lambda_k}\,
{g}_{\rho_1\lam_1}\dots {g}_{\rho_k\lam_k}=0$ for all indices $ \rho_1 , \dots ,\rho_k $ and since the metric is regular it is satisfied if and only if $\overset{k}{K}{}^{\lambda_1\dots\lambda_k}=0$ for all indices $ \lambda_1 , \dots ,\lambda_k $.
\end{proof}

\begin{thm}
The phase function $ \overset{k}{K}(\wha\tau) $, $k\ge 1$, is conserved with respect to the gravitational Reeb vector field, i.e. $ \wha\gamma.\overset{k}{K}(\wha\tau) = 0 $, if and only if $\overset{k}{K}$ is a Killing $k$--vector field.
\end{thm}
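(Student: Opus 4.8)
The plan is to evaluate the Reeb derivative $\wha\gamma.\overset{k}{K}(\wha\tau)$ directly in coordinates and to recognise it as the complete contraction of the covariant derivative of $\overset{k}{K}$ with the velocity $\br\delta_0$; its vanishing is then exactly the Killing tensor equation. Since $\wha\tau(\underline{X}[\overset{k}{K}]) = \overset{k}{K}(\wha\tau)$ and the generalized vector field $\underline{X}[\overset{k}{K}]$ of \eqref{Eq: 4.13} satisfies the projectability condition, it suffices to test conservation of $\overset{k}{K}(\wha\tau)$. I would start from the coordinate formula for the Reeb derivative used in the proof of Theorem~\ref{Th: 4.2},
$$
\wha\gamma.f = \tfrac{\h}{m\,c}\,\alpha^0\,\br\delta^\lambda_0\,\big(\der_\lambda f + \Gamma[g]\Ga\lambda i\,\der^0_i f\big)\,,
$$
and apply it to $f = \overset{k}{K}(\wha\tau)$ in the form \eqref{Eq: 4.10}, i.e. to $f = (-c_0\,\alpha^0)^k\,\overset{k}{K}{}^{\underline\rho_k}\,\br G^0_{0\underline\rho_k}$ with $\br G^0_{0\rho} = \tfrac{m}{\h_0}\,\br g_{0\rho}$ and $\br g_{0\rho} = g_{\mu\rho}\,\br\delta^\mu_0$.

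Expanding by the Leibniz rule splits $\wha\gamma.f$ into three groups of terms: those in which a derivative hits the normalisation $\alpha^0$, those in which it hits the metric factors inside $\br g_{0\rho}$, and the single group in which $\der_\lambda$ differentiates the components $\overset{k}{K}{}^{\underline\rho_k}$. The elementary identities $\der^0_i\br\delta^\mu_0 = \delta^\mu_i$, $\der^0_i\alpha^0 = (\alpha^0)^3\,\br g_{0i}$ and $\der_\lambda\br\delta^\mu_0 = 0$ reduce the vertical block, while the coordinate form $\Gamma[g]\Ga\lambda i = \br\delta^i_\sigma\,K[g]\col\lambda\sigma\tau\,\br\delta^\tau_0$ of the gravitational phase connection turns the connection term into a contraction of the geodesic acceleration $K[g]\col\lambda\sigma\tau\,\br\delta^\lambda_0\,\br\delta^\tau_0$. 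The decisive move is to use metricity $\nab g = 0$ to trade $\br\delta^\lambda_0\,\der_\lambda g_{\mu\rho}$ for Christoffel symbols: the terms produced by $\der\alpha^0$ and by $\der g$ should cancel against the geodesic-acceleration terms coming from the connection, and the surviving piece combines with $\br\delta^\lambda_0\,\der_\lambda\overset{k}{K}{}^{\underline\rho_k}$ into a covariant derivative. I expect the outcome to read
$$
\wha\gamma.\overset{k}{K}(\wha\tau) = (\text{nowhere-zero factor})\cdot \br\delta^\lambda_0\,\br\delta^{\mu_1}_0\cdots\br\delta^{\mu_k}_0\,\nab_\lambda\big(g_{\mu_1\rho_1}\cdots g_{\mu_k\rho_k}\,\overset{k}{K}{}^{\rho_1\dots\rho_k}\big)\,,
$$
i.e. the full contraction of $\nab K$, with $K$ the lowered tensor, by $k+1$ copies of the velocity direction $\br\delta_0 = \der_0 + x^p_0\,\der_p$.

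To conclude I would argue exactly as in Lemma~\ref{Lm: 4.6}. Contracting with $k+1$ copies of $\br\delta_0$ symmetrises the lower index of $\nab_\lambda$ with the remaining $k$ indices, so the right-hand side is a homogeneous polynomial of degree $k+1$ in the fibre coordinates $x^i_0$ whose coefficients are the spacetime functions $\nab_{(\lambda}K_{\mu_1\dots\mu_k)}$. Such a polynomial vanishes on the open set of admissible timelike velocities if and only if all its coefficients vanish; since the nowhere-zero factor plays no role, this gives $\wha\gamma.\overset{k}{K}(\wha\tau) = 0$ if and only if $\nab_{(\lambda}K_{\mu_1\dots\mu_k)} = 0$. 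By the equivalence recalled earlier, that a symmetric covariant tensor $K$ is Killing if and only if $\overset{k}{K} = K\Sha$ is a Killing $k$-vector field, this is precisely the statement that $\overset{k}{K}$ is a Killing $k$-vector field.

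The step I expect to be the main obstacle is the cancellation in the middle paragraph: checking that the contributions from differentiating the normalisation $\alpha^0$ and the metric are annihilated by the geodesic-acceleration terms carried by the phase connection, so that only the covariant-derivative term survives. This is where metricity of the Levi-Civita connection and the total symmetry of $\overset{k}{K}$ are both used. A more conceptual alternative would be to identify $\overset{k}{K}(\wha\tau)$, up to the nowhere-zero factor $(-c_0\alpha^0)^k$, with the restriction to the unit pseudosphere bundle of the fibrewise-homogeneous function $\pi^*(\overset{k}{K})$, and to deduce the claim from the classical result that $\pi^*(\overset{k}{K})$ is a constant of geodesic motion if and only if $\overset{k}{K}$ is Killing; but making the homogeneity and restriction arguments precise would require care of its own.
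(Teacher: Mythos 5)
Your proposal is correct and follows essentially the same route as the paper: a direct coordinate computation of $\wha\gamma.\overset{k}{K}(\wha\tau)$ identifying it with a nowhere-zero multiple of the Killing operator applied to $\overset{k}{K}$ contracted with velocities, followed by the polynomial-in-fibre-coordinates argument of Lemma \ref{Lm: 4.6}. The only difference is notational: the paper writes the result as $\tfrac12[\overset{k}{K},\wha{\bar G}](\wha\tau)$ using the Schouten-bracket form \eqref{Eq: 2.2} of the Killing equation (via the single identity $\wha\gamma.\wha\tau_\rho=-\tfrac12(\alpha^0)^2\der_\rho\wha g_{00}$ rather than your term-by-term cancellation with metricity), whereas you use the equivalent covariant form $\nabla_{(\lambda}K_{\mu_1\dots\mu_k)}=0$ for the lowered tensor.
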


\begin{proof}
We have 
$$
\wha\gamma.\overset{k}{K}(\wha\tau) =(\wha\gamma.\overset{k}{K}{}^{\lambda_1\dots\lambda_k})\,\wha\tau_{\lam_1}\dots \wha\tau_{\lam_k}
+ k\, \overset{k}{K}{}^{\rho\lambda_1\dots\lambda_{k-1}}\,(\wha\gamma.\wha\tau_\rho)\, \wha\tau_{\lam_1}\dots \wha\tau_{\lam_{k-1}}\,.
$$
But we have 
$$
\wha\gamma.\wha\tau_\rho = \gamma.\tau_\rho= - \tfrac 12 (\alpha^{0})^{2}\, \der_\rho\wha g_{00}\,,
$$
which implies, by using the identities $ \breve{\delta}^\rho_0 = g^{\rho\sigma}\breve{g}_{0\sigma} $ and $ \der_\rho\wha g_{00} = - \breve{g}_{0\sigma_1}\,\breve{g}_{0\sigma_2}\,\der_\rho g^{\sigma_1\sigma_2} $,
\begin{align}\label{Eq: a}
\wha\gamma.\overset{k}{K}(\wha\tau) 
& =
- \tfrac{\h^{2}}{m^{2}\, c^{2}}\,\big(
g^{\rho\lambda_1}\, \der_\rho\overset{k}{K}{}^{\lambda_2\dots\lambda_{k+1}}
- \tfrac k2 \, \overset{k}{K}{}^{\rho\lambda_1\dots\lambda_{k-1}}\, \der_\rho g^{\lam_k\lambda_{k+1}}
\big)\,\wha\tau_{\lam_1}\dots \wha\tau_{\lam_{k+1}}
\\
& = \nonumber
- \tfrac{1}{2}\,[\wha{\bar G},\overset{k}{K}](\wha\tau)
= \tfrac{1}{2}\,[\overset{k}{K},\wha{\bar G}](\wha\tau)
\,.
\end{align}
So, by Lemma \ref{Lm: 4.6}, the function $ \overset{k}{K}(\wha\tau) $ is conserved if and only if $ \overset{k}{K} $ is a Killing $k$--vector field.
\end{proof}
 
\begin{crl}\label{Th: 4.6}
Let $ \overset{k}{K}$ be a Killing $ k $-vector field, $ k\ge 1 $, and $ \underline{X}[ \overset{k}{K}] $ is
the induced generalized vector field \eqref{Eq: 4.13}. Then the Hamilton-Jacobi lift  $X[ \overset{k}{K}]$ given by \eqref{Eq: 4.11} is an infinitesimal symmetry of the gravitational contact phase structure. Moreover, for $k=1$ this infinitesimal symmetry is projectable on the Killing vector field $ \overset{1}{K}$.
For $ k\ge 2 $ the corresponding infinitesimal symmetry is hidden and, especially,
for $ \overset{2}{K}=k\,\wha{\bar G}\,,$ $k\in\mathbb{R}\,,$ the corresponding hidden symmetry is $ - k\,\wha\gamma  $, i.e. a constant multiple of the Reeb vector field $- \wha\gamma$.
\end{crl}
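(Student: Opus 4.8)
The plan is to assemble the corollary almost entirely from the machinery already in place, the only genuinely new inputs being two short computations in the border cases $k=1$ and $\overset{2}{K}=k\,\wha{\bar G}$. First I would record that the generalized vector field $\underline{X}[\overset{k}{K}]$ of \eqref{Eq: 4.13} automatically satisfies the projectability condition: this follows from the identity $\wha\tau(\underline{X}[\overset{k}{K}])=\overset{k}{K}(\wha\tau)$ already observed after \eqref{Eq: 4.13}, which forces the Hamilton--Jacobi lift \eqref{Eq: 4.11}, built from the phase function $\overset{k}{K}(\wha\tau)$, to project back onto $\underline{X}[\overset{k}{K}]$. Next, since $\overset{k}{K}$ is assumed Killing, the preceding theorem (that $\overset{k}{K}(\wha\tau)$ is conserved, i.e. $\wha\gamma.\overset{k}{K}(\wha\tau)=0$, if and only if $\overset{k}{K}$ is Killing) supplies the conservation condition. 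With both the projectability and the conservation conditions in hand, the equivalence $1\Leftrightarrow 2$ of Theorem \ref{Th: 4.3} immediately yields that $X[\overset{k}{K}]$ is an infinitesimal symmetry of the gravitational contact phase structure, which settles the first assertion.

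For the classification into projectable versus hidden I would invoke Theorem \ref{Th: 4.4}, so the task reduces to deciding when $\underline{X}[\overset{k}{K}]$ factorises through a genuine spacetime vector field. For $k=1$ formula \eqref{Eq: 4.13} collapses: the contraction term carries no factors of $\wha\tau$ and the second term vanishes because of its coefficient $(k-1)=0$, so $\underline{X}[\overset{1}{K}]=\overset{1}{K}$, a field with components in $C^\infty(\f E)$. This is exactly the setting of Example \ref{Ex: 4.1}, where \eqref{Eq: 4.2} holds identically, \eqref{Eq: 4.8} reduces to $L_{\overset{1}{K}}g=0$, and the Hamilton--Jacobi lift coincides with the flow lift $\M J_1\overset{1}{K}$; hence the symmetry is projectable onto the Killing vector field $\overset{1}{K}$.

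For $k\ge 2$ the field $\underline{X}[\overset{k}{K}]$ depends genuinely on the jet coordinates $x^i_0$ through the factors $\wha\tau$ and $\wha{\K{d}}$ (both containing $\alpha^0$ and $x^i_0$) entering \eqref{Eq: 4.13}, so it does not factorise through a spacetime vector field and the corresponding symmetry is, by Theorem \ref{Th: 4.4}, hidden. Finally, for the distinguished choice $\overset{2}{K}=k\,\wha{\bar G}$ with $k\in\mathbb R$ -- which is Killing, being a constant multiple of the canonical Killing $(2,0)$-tensor -- I would compute the generating phase function directly: by \eqref{Eq: 4.10} together with the identity $\wha{\bar G}(\wha\tau,\wha\tau)=-1$ recorded before Lemma \ref{Lm: 4.6}, one gets $\overset{2}{K}(\wha\tau)=k\,\wha{\bar G}(\wha\tau,\wha\tau)=-k$, a constant on $\M J_1\f E$. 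Hence $d(\overset{2}{K}(\wha\tau))\Sha=0$ and \eqref{Eq: 4.11} reduces to $X[\overset{2}{K}]=\overset{2}{K}(\wha\tau)\,\wha\gamma=-k\,\wha\gamma$, a constant multiple of the Reeb vector field $-\wha\gamma$, as claimed. The main point requiring care will be this last sign bookkeeping (that the constant value of $\overset{2}{K}(\wha\tau)$ is $-k$ rather than $+k$), together with confirming in the $k\ge 2$ case that the $x^i_0$-dependence does not cancel -- and indeed it does not even in the special case, where $\underline{X}[\overset{2}{K}]=-k\,\wha{\K{d}}$ still involves $x^i_0$; everything else is a direct application of the theorems already proved.
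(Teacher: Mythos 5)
Your proposal is correct and follows essentially the same route as the paper: the paper's own proof consists of the single observation that $\underline{X}[\overset{1}{K}]=\overset{1}{K}$ for $k=1$, with everything else (projectability via $\wha\tau(\underline{X}[\overset{k}{K}])=\overset{k}{K}(\wha\tau)$, conservation via the preceding theorem on Killing multivectors, and the symmetry conclusion via Theorem \ref{Th: 4.3}) left implicit exactly as you assemble it. Your explicit check that $\overset{2}{K}(\wha\tau)=k\,\wha{\bar G}(\wha\tau,\wha\tau)=-k$ and hence $X[\overset{2}{K}]=-k\,\wha\gamma$ is also the intended computation, with the correct sign.
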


\begin{proof}
It is easy to see that for $k=1$ we obtain $\underline{X}[ \overset{1}{K}] =  \overset{1}{K}$ which is a spacetime Killing vector field.
\end{proof}

\begin{rmk}
{\rm 
The pullback of a spacetime function $  \overset{0}{K} $ is conserved if and only if it is a constant. Then for a constant $\overset{0}{K} $ and Killing $k$-vector fields $\overset{k}{K}$, $k\ge 1$, the conserved phase function of the type $ K = \overset{0}{K} + \sum_{k\ge 1} \overset{k}{K}(\wha\tau) $ admits the hidden infinitesimal symmetry
$$
X[K] = \overset{0}{K}\, \wha \gamma + \sum_{k\ge 1} {X}[\overset{k}{K}]
$$
which projects on the generalized vector field
$$
\underline{X}[K] = \big(\overset{0}{K}-\sum_{k\ge 2}(k-1)\,\overset{k}{K}(\wha\tau) \big) \, \wha{\K{d}} + \overset{1}{K} + \sum_{k\ge 2} k\, \underset{(k-1)-times}{\underbrace{\wha\tau\con \dots\con \wha\tau \con}}\overset{k}{K}\,
$$
satisfying the projectability and the conservation conditions.
}
\hfill\END
\end{rmk}

\section{Lie algebra of infinitesimal symmetries of the gravitational contact phase structure}
\setcounter{equation}{0}
Phase infinitesimal symmetries form a Lie algebra with respect to the Lie bracket of vector fields. It follows from
$$
L_{[X,Y]} = L_X\, L_Y - L_Y\, L_X\,.
$$
\subsection{General situation}
 
In this section we assume a generalized vector field  $\underline{X}$
and by $X$ we denote the vector field \eqref{Eq: 4.1}.
So if $\underline{X}$ and $\underline{Y}$ are generalized vector fields satisfying the projectability and the conservation conditions \eqref{Eq: 4.2} and 
\eqref{Eq: 4.8}, then the generalized vector field $\underline{[X,Y]}$
has to satisfy the projectability and the conservation conditions too. The projectability condition can be checked directly in coordinates and we shall omit it.

\begin{lem}\label{Lm: 5.1}
For generalized vector fields $ \underline{X} $
and $ \underline{Y} $ satisfying the projectability condition  we
have
\begin{equation}\label{Eq: 5.1}
\{\wha\tau(\underline{X}),\wha\tau(\underline{Y})\}+\wha\tau(\underline{X})\, \wha\gamma.(\wha\tau(\underline{Y})) -
\wha\tau(\underline{Y})\, \wha\gamma.(\wha\tau(\underline{X}))
= 
\wha\tau(\underline{[X,Y]}) 
\,.
\end{equation}
\end{lem}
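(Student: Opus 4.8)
The plan is to reduce \eqref{Eq: 5.1} to a computation of $\wha\tau$ on the Lie bracket of two Hamilton--Jacobi lifts. Write $f=\wha\tau(\underline{X})$ and $g=\wha\tau(\underline{Y})$, so that by \eqref{Eq: 4.1} we have $X=df\Sha+f\,\wha\gamma$ and $Y=dg\Sha+g\,\wha\gamma$. To reuse the bracket formulas of Section~2 I would pass to the contact data $\ome=-\wha\tau$, $\Ome=d\ome$ and the dual Reeb field $E=-\wha\gamma$ of Theorem~\ref{Th: 3.1}, so that $i_E\ome=1$, $i_E\Ome=0$, $i_\ome\Lam=0$, and $X=df\Sha-f\,E$, $Y=dg\Sha-g\,E$. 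Since $\wha\tau$ is a horizontal $1$-form it annihilates $V\M J_1\f E$, whence $\wha\tau([X,Y])=\wha\tau(T\pi^1_0([X,Y]))=\wha\tau(\underline{[X,Y]})$; thus it is enough to evaluate $\wha\tau$ on $[X,Y]$, and the left-hand side of \eqref{Eq: 5.1} is exactly $\{f,g\}+f\,\wha\gamma.g-g\,\wha\gamma.f$.

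First I would expand $[X,Y]$ bilinearly as $[df\Sha,dg\Sha]-[df\Sha,g\,E]-[f\,E,dg\Sha]+[f\,E,g\,E]$ and evaluate the terms with the identities recalled in the proof of Lemma~\ref{Lm: 2.5} (from \cite{JanMod09}), namely $[df\Sha,dg\Sha]=d\{f,g\}\Sha-\Ome(df\Sha,dg\Sha)\,E$ and $[E,df\Sha]=\Lam(L_E\ome,df)\,E$, together with the Leibniz rule $[U,h\,V]=h\,[U,V]+(U.h)\,V$. For a contact pair $L_E\ome=i_E\,d\ome+d\,i_E\ome=i_E\Ome=0$, so $[E,df\Sha]=[E,dg\Sha]=0$; using $df\Sha.g=\Lam(df,dg)=\{f,g\}$, $dg\Sha.f=-\{f,g\}$ and $[f\,E,g\,E]=(f\,E.g-g\,E.f)\,E$, the four contributions collect to
\begin{equation*}
[X,Y]=d\{f,g\}\Sha+\big(-\Ome(df\Sha,dg\Sha)-2\,\{f,g\}+f\,E.g-g\,E.f\big)\,E\,.
\end{equation*}

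The one genuinely non-formal ingredient, and the step I expect to demand the most care, is the identity $\Ome(df\Sha,dg\Sha)=-\{f,g\}$. I would derive it from the duality \eqref{Eq: 2.3}: from $i_\ome\Lam=0$ we get $\Lam\Sha\ome=0$ and $\ome(\Lam\Sha\,\cdot)=0$, and $\text{im}\,\Ome\Fla\sub\ker i_E$. Decomposing an arbitrary $1$-form as $\gam=(i_E\gam)\,\ome+\gam_1$ with $\gam_1=\gam-(i_E\gam)\,\ome$ (so that $i_E\gam_1=0$, hence $\gam_1\in\text{im}\,\Ome\Fla$) and using $\Ome\Fla\com\Lam\Sha=\mathrm{id}$ on $\text{im}\,\Ome\Fla$, one obtains $\Ome\Fla(\Lam\Sha\gam)=\gam-(i_E\gam)\,\ome$. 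Contracting $\Ome\Fla(df\Sha)=df-(E.f)\,\ome$ with $dg\Sha$ and using $\ome(\Lam\Sha\,\cdot)=0$ gives $\Ome(df\Sha,dg\Sha)=\langle df,dg\Sha\rangle=\Lam(dg,df)=-\{f,g\}$. The same vanishing $\ome(\Lam\Sha\,\cdot)=0$ shows that $\ome$ kills the $\Lam\Sha$-part $d\{f,g\}\Sha$ of $[X,Y]$.

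Finally I would substitute $\Ome(df\Sha,dg\Sha)=-\{f,g\}$, so that the coefficient of $E$ becomes $-\{f,g\}+f\,E.g-g\,E.f$, and apply $\wha\tau=-\ome$. Since $\ome$ annihilates $d\{f,g\}\Sha$ and $\ome(E)=1$, this yields $\wha\tau([X,Y])=\{f,g\}-f\,E.g+g\,E.f$, and rewriting $E=-\wha\gamma$ converts $-f\,E.g+g\,E.f$ into $f\,\wha\gamma.g-g\,\wha\gamma.f$, which is precisely \eqref{Eq: 5.1}. As a check, imposing in addition the conservation condition $\wha\gamma.f=\wha\gamma.g=0$ collapses the right-hand side to $\{f,g\}$, in agreement with Corollary~\ref{Cr: 2.6}(2); the projectability hypothesis is used only to guarantee that $\underline{X},\underline{Y}$ really are the projections $T\pi^1_0X,T\pi^1_0Y$ entering $\underline{[X,Y]}$.
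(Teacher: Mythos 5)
Your argument reaches the correct identity by a genuinely different route from the paper. The paper proves Lemma \ref{Lm: 5.1} by brute force: it writes the coordinate expression of the Poisson bracket, computes $\{\wha\tau(\underline{X}),\wha\tau(\underline{Y})\}$, the combination $\wha\tau(\underline{X})\,\wha\gamma.\wha\tau(\underline{Y})-\wha\tau(\underline{Y})\,\wha\gamma.\wha\tau(\underline{X})$, and $\wha\tau(\underline{[X,Y]})$ separately in the adapted chart, and observes that the $(\alpha^0)^3$-terms cancel. Your proof instead works invariantly with the contact pair $(\ome,\Ome)=(-\wha\tau,d\ome)$ and the duality relations \eqref{Eq: 2.3}; the reduction $\wha\tau([X,Y])=\wha\tau(\underline{[X,Y]})$ via horizontality of $\wha\tau$, and the derivation of $\Ome(df\Sha,dg\Sha)=-\{f,g\}$ from $\Ome\Fla(\Lam\Sha\gam)=\gam-(i_E\gam)\,\ome$, are both sound (the latter is implicitly confirmed by Corollary \ref{Cr: 2.6}.2). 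Your route has the merit of explaining structurally where the extra terms $f\,\wha\gamma.g-g\,\wha\gamma.f$ come from, namely from the failure of conservation, whereas the paper's computation gives no such insight.

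There is, however, one step that is wrong as written, although it does not damage the conclusion. The identities $[df\Sha,dg\Sha]=d\{f,g\}\Sha-d\ome(df\Sha,dg\Sha)\,E$ and $[E,df\Sha]=\Lam(L_E\ome,df)\,E$ quoted in the proof of Lemma \ref{Lm: 2.5} are used there for \emph{conserved} functions; they fail for general $f$. Indeed, $[E,df\Sha]=L_E(\Lam\Sha(df))=[E,\Lam]\Sha(df)+d(E.f)\Sha$, which for a contact pair equals $d(E.f)\Sha$, not $0$; similarly $[df\Sha,dg\Sha]$ acquires correction terms when $E.f,E.g\neq 0$ (a quick check on $\ome=dz-y\,dx$ with $f=x$, $g=z$ shows your displayed formula for $[X,Y]$ misses a summand). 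All of these corrections lie in $\operatorname{im}\Lam\Sha\sub\ker\ome$, so they are invisible to $\wha\tau$ and \eqref{Eq: 5.1} survives — but you must say this, since your argument as stated asserts an equality of vector fields that is false. A clean way to avoid the issue entirely is to bypass the full bracket and compute only what you need: from $i_Y\ome=-g$, $L_X\ome=i_{df\Sha}\Ome-df=-(E.f)\,\ome$ (using $\Ome\Fla(df\Sha)=df-(E.f)\,\ome$) one gets
\begin{equation*}
\ome([X,Y])=L_X i_Y\ome-i_Y L_X\ome=-X.g+g\,(E.f)=-\{f,g\}+f\,E.g-g\,E.f\,,
\end{equation*}
which upon multiplying by $-1$ and substituting $E=-\wha\gamma$ is exactly \eqref{Eq: 5.1}.
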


\begin{proof}
For any phase functions $ f,g $ we have
$$
\{f,g\}
=
\tfrac{1}{c_0\, \alpha^{0}}\big[
\breve{G}^{j\lambda}_{0}\, (\der_\lambda f\, \der^{0}_j g - 
\der_\lambda g\, \der^{0}_j f)
- \breve{G}^{i\rho}_{0}\,\breve{G}^{j\sigma}_{0}\,
(\der_\sigma \breve{G}_{0\rho}^{0}- \der_\rho \breve{G}_{0\sigma}^{0})\, \der^{0}_i f\, \der^{0}_j g
\big]\,.
$$
Then for the phase functions $ \wha\tau(\underline{X}) $ and $ \wha\tau(\underline{Y}) $ where
$ \underline{X} $
and $ \underline{Y} $ satisfy the projectability condition  we get
\begin{align*}
\{\wha\tau(\underline{X}),  \wha\tau(\underline{X})\}
& =  
{c_0\, \alpha^{0}}
\breve{G}_{0\rho}^{0}\, (\underline{Y}^\sigma \, \der_\sigma\underline{X}^\rho
-
\underline{X}^\sigma \, \der_\sigma\underline{Y}^\rho)
\\
&\quad 
+ c_0\,(\alpha)^{3}\,\big[\breve{g}_{0\sigma}\,\underline{Y}^\sigma
\big(\breve{G}_{0\rho}^{0}\,\breve{\delta}_{0}^{\lam}\,
\der_\lambda   \underline{X}^\rho + 
\tfrac 12 \underline{X}^\rho\,\der_\rho\wha G^{0}_{00}
\big)
\\
&\qquad
 -
\breve{g}_{0\rho}\,\underline{X}^\rho
\big(\breve{G}_{0\sigma}^{0}\,\breve{\delta}_{0}^{\lam}\,
\der_\lambda   \underline{Y}^\sigma + 
\tfrac 12 \underline{Y}^\sigma\,\der_\sigma\wha G^{0}_{00}
\big)
\big]\,.
\end{align*}
Further
\begin{align*}
 \wha\tau(\underline{X})\, \wha\gamma.\wha\tau(\underline{Y})
& -
\wha\tau(\underline{Y})\, \wha\gamma.\wha\tau(\underline{X})
= 
\\
& =
c_0\,(\alpha^{0})^{3}\big[\breve{g}_{0\rho}\,\underline{X}^{\rho}\, \big(
\breve{G}_{0\sigma}^{0}\,\breve{\delta}_{0}^{\lam}\,
\der_\lambda   \underline{Y}^\sigma + 
\tfrac 12 \underline{Y}^\sigma\,\der_\sigma\wha G^{0}_{00}
\big)
\\
&\quad -
\breve{g}_{0\sigma}\,\underline{Y}^{\sigma}\, \big(
\breve{G}_{0\rho}^{0}\,\breve{\delta}_{0}^{\lam}\,
\der_\lambda   \underline{X}^\rho + 
\tfrac 12 \underline{X}^\rho\,\der_\rho\wha G^{0}_{00}
\big)
\big]\,.
\end{align*}

On the other hand 
\begin{align*}
\underline{[X,Y]}
& =
\big[
\underline{X}^{\rho}\,\der_{\rho}\underline{Y}^{\lambda}
-  \br G^{j\rho}_0\,(
\underline{X}^\sigma\,\der_\sigma\breve{G}^0_{0\rho}
+ \breve{G}^0_{0\sigma}\,\der_\rho\underline{X}^\sigma
 )\,   \der^0_j\underline{Y}^{\lambda}
 \\
& \quad
 - \underline{Y}^{\rho}\,\der_{\rho}\underline{X}^{\lambda}
+ \br G^{j\rho}_0\,(
\underline{Y}^\sigma\,\der_\sigma\breve{G}^0_{0\rho}
+ \breve{G}^0_{0\sigma}\,\der_\rho\underline{Y}^\sigma
 )\,   \der^0_j\underline{X}^{\lambda}  
\big]\, \der_\lambda\,.
\end{align*}
and
\begin{align*}
\wha\tau(\underline{[X,Y]}) 
& =
 {c_0\, \alpha^{0}}
\breve{G}_{0\rho}^{0}\, (\underline{Y}^\sigma \, \der_\sigma\underline{X}^\rho
-
\underline{X}^\sigma \, \der_\sigma\underline{Y}^\rho)\,
\end{align*}
which proves Lemma \ref{Lm: 5.1}.
\hfill{ }
\end{proof}

\begin{crl}\label{Cr: 5.2}
For generalized vector fields $ \underline{X} $
and $ \underline{Y} $ satisfying the projectability and the conservation conditions we
have
\begin{equation}\label{Eq: 5.2}
\{\wha\tau(\underline{X}),\wha\tau(\underline{Y})\}
= 
\wha\tau(\underline{[X,Y]}) 
\,.
\end{equation}
Moreover, the phase function $ \wha\tau(\underline{[X,Y]})  $ is conserved.
\end{crl}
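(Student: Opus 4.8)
The plan is to obtain Corollary \ref{Cr: 5.2} as an immediate consequence of Lemma \ref{Lm: 5.1}, with no fresh computation. Since $\underline{X}$ and $\underline{Y}$ are assumed to satisfy the projectability condition, Lemma \ref{Lm: 5.1} applies verbatim and supplies the master identity \eqref{Eq: 5.1}. The first step is then to feed in the conservation conditions, which assert precisely that $\wha\gamma.(\wha\tau(\underline{X})) = 0$ and $\wha\gamma.(\wha\tau(\underline{Y})) = 0$. Substituting these two vanishing factors into \eqref{Eq: 5.1} annihilates both cross terms $\wha\tau(\underline{X})\,\wha\gamma.(\wha\tau(\underline{Y}))$ and $\wha\tau(\underline{Y})\,\wha\gamma.(\wha\tau(\underline{X}))$, leaving exactly the claimed identity \eqref{Eq: 5.2}, namely $\{\wha\tau(\underline{X}),\wha\tau(\underline{Y})\} = \wha\tau(\underline{[X,Y]})$.

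For the \emph{moreover} clause I would argue that the sheaf of conserved phase functions is closed under the Poisson bracket. By \eqref{Eq: 5.2} the function $\wha\tau(\underline{[X,Y]})$ equals $\{\wha\tau(\underline{X}),\wha\tau(\underline{Y})\}$, a Poisson bracket of two functions that are conserved by hypothesis. The Reeb vector field of the gravitational contact pair $(-\wha\tau,\Omega)$ is $E = -\wha\gamma$ by Theorem \ref{Th: 3.1}, so "conserved" means $E.f = 0$. The closure is exactly the content of \eqref{Eq: 2.7}: writing $E.\{f,g\} = \{E.f,g\} + \{f,E.g\} + i_{[E,\Lambda]}\,df\wedge dg$ with $f = \wha\tau(\underline{X})$ and $g = \wha\tau(\underline{Y})$, the first two terms vanish because $f,g$ are conserved, and the last term vanishes because $[E,\Lambda] = [-\wha\gamma,\Lambda] = 0$ for the genuine contact (Jacobi) structure by Theorem \ref{Th: 3.1}(2). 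Hence $E.\{\wha\tau(\underline{X}),\wha\tau(\underline{Y})\} = 0$, which is the conservation of $\wha\tau(\underline{[X,Y]})$.

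I do not expect any genuine obstacle here: the whole statement is carried by Lemma \ref{Lm: 5.1} and by the general closure identity \eqref{Eq: 2.7}, both of which are already available. The only point requiring a little care is the correct identification of the Reeb vector field as $-\wha\gamma$, together with its sign, so that the hypothesis $\wha\gamma.f = 0$ coincides literally with the condition $E.f = 0$ appearing in \eqref{Eq: 2.7}. In the present contact case this bookkeeping is further simplified by the fact that $[E,\Lambda] = 0$, so the curvature-type correction term in \eqref{Eq: 2.7} drops out identically and the closure of conserved functions under the Poisson bracket is immediate.
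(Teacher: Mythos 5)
Your proposal is correct and follows essentially the same route as the paper: equation \eqref{Eq: 5.2} is obtained by substituting the conservation conditions into the identity \eqref{Eq: 5.1} of Lemma \ref{Lm: 5.1}, and the conservation of $\wha\tau(\underline{[X,Y]})$ follows from the closure of conserved phase functions under the Poisson bracket via \eqref{Eq: 2.7}. Your extra remarks on the sign of the Reeb vector field and the vanishing of $[E,\Lambda]$ for the contact case are correct elaborations of details the paper leaves implicit.
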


\begin{proof}
\eqref{Eq: 5.2} follows from Lemma \ref{Lm: 5.1} and the fact  that $\wha\tau(\underline{X})$ and $\wha\tau(\underline{Y})$ are conserved. Further $ \wha\tau(\underline{[X,Y]})  $ is conserved because it is the Poisson bracket of two conserved phase functions which is by \eqref{Eq: 2.7} conserved.
\end{proof}

\begin{rmk}\label{Rm: 5.1}
{\rm 
If we have a phase vector field $X$ then it defines the unique
generalized vector field   $\underline{X}= \pi^{1}_{0}(X) $.
On the other hand given a generalized vector fields $\underline{X}$
there are many phase vector fields projectable on $\underline{X}$.
So there is no uniquely defined generalized vector field
$\underline{[X,Y]} $, where $X$ and $Y$ are some phase vector fields
projectable on $\underline{X}$ and $\underline{Y}$, respectively.
 The projectability condition  implies that one of the phase vector fields projectable on $\underline{X}$
 is the Hamilton-Jacobi lift of $ \wha\tau(\underline{X}) $, so for generalized vector fields satisfying  the projectability condition
we have distinguished phase vector fields which allow us to define 
the Lie bracket for generalized vector fields satisfying  the projectability condition by the condition $[\underline{X}, \underline{Y}] = \underline{[X,Y]} $, where as $X$ and $Y$ we assume the Hamilton-Jacobi lifts \eqref{Eq: 4.1}.
}
\hfill\END
\end{rmk}

\begin{rmk}\label{Rm: 5.2}
{\rm 
Lemma \ref{Lm: 5.1} means that the rule transforming generalized vector fields $\underline{X}$ satisfying the projectability condition into the 
 phase functions $\wha\tau(\underline{X})$ is a Lie algebra morphism with respect to the Lie bracket of generalized vector fields and the Jacobi bracket of phase functions.

On the other hand Corollary \ref{Cr: 5.2} means that the rule transforming generalized vector fields $\underline{X}$ 
satisfying both projectability and conservation conditions into
 the phase functions $\wha\tau(\underline{X})$ is a Lie algebra morphism with respect to the Lie bracket of generalized vector fields and the Poisson bracket of conserved phase functions.
 }
 \hfill\END
\end{rmk}

\begin{thm}\label{Th: 5.3}
For generalized vector fields $ \underline{X} $
and $ \underline{Y} $ such that $X$ and $Y$ are hidden symmetries
of the gravitational contact phase structure the vector field $[X,Y]$  is given as
\begin{equation}\label{Eq: 5.3}
 [X,Y]
 =
 d\wha\tau(\underline{[X,Y]}) \Sha + \wha\tau(\underline{[X,Y]})\, \wha\gamma
\end{equation}
and the phase function $\wha\tau(\underline{[X,Y]})$ is conserved. 
\end{thm}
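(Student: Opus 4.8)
The plan is to obtain the statement with essentially no new computation, by combining the abstract Lie-algebra structure of contact symmetries from Section~2 with Corollary~\ref{Cr: 5.2}. First I would observe that, since $X$ and $Y$ are hidden symmetries, they are in particular infinitesimal symmetries of the gravitational contact structure $(-\wha\tau,\Omega)$ with $\Omega=-d\wha\tau$. The identity $L_{[X,Y]}=L_XL_Y-L_YL_X$ then gives $L_{[X,Y]}(-\wha\tau)=0$ and $L_{[X,Y]}\Omega=0$, so $[X,Y]$ is again an infinitesimal symmetry of the same contact structure. This is exactly the remark, made in the proof of Corollary~\ref{Cr: 2.3}, that the infinitesimal symmetries of a contact structure form an $\mathbb{R}$-algebra under the Lie bracket.

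Once $[X,Y]$ is known to be a contact symmetry, the general description \eqref{Eq: 4.1} forces it to be the Hamilton--Jacobi lift of its own generator $\wha\tau([X,Y])$. Because $\wha\tau$ is a horizontal form, it factors through $T\pi^1_0$, so $\wha\tau([X,Y])=\wha\tau(\underline{[X,Y]})$ with $\underline{[X,Y]}=T\pi^1_0([X,Y])$, and therefore
$[X,Y]=d(\wha\tau(\underline{[X,Y]}))\Sha+\wha\tau(\underline{[X,Y]})\,\wha\gamma$,
which is precisely \eqref{Eq: 5.3}. The conservation condition $\wha\gamma.(\wha\tau(\underline{[X,Y]}))=0$ is then automatic, since it is built into the characterization \eqref{Eq: 4.1} of contact symmetries; alternatively it follows from the conservation statement of Corollary~\ref{Cr: 5.2}, as $\wha\tau(\underline{[X,Y]})$ is a Poisson bracket of conserved phase functions and \eqref{Eq: 2.7} keeps such functions conserved.

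To pin down the generator explicitly and close the loop, I would match the two available bracket formulas. Writing $f=\wha\tau(\underline{X})$ and $g=\wha\tau(\underline{Y})$, the symmetries $X$ and $Y$ correspond to the generator pairs $(f,-f)$ and $(g,-g)$, so the contact case of Corollary~\ref{Cr: 2.6} yields $[X,Y]=d\{f,g\}\Sha+\{f,g\}\,\wha\gamma$, i.e. the generator is $\{f,g\}$. On the other hand \eqref{Eq: 5.2} of Corollary~\ref{Cr: 5.2} identifies $\{f,g\}=\wha\tau(\underline{[X,Y]})$, which agrees with the previous paragraph and re-proves conservation. The only point requiring care, and the one I regard as the sole real obstacle, is the bookkeeping of sign conventions in the dictionary $\omega\leftrightarrow-\wha\tau$, $E\leftrightarrow-\wha\gamma$ relating Section~2 to Sections~4--5: one must check that the pair attached to \eqref{Eq: 4.1} is indeed $(f,-f)$ and that $i_{-\wha\gamma}(-\wha\tau)=\wha\tau(\wha\gamma)=1$ (which follows from $\gamma\con\tau=1$), so that Corollary~\ref{Cr: 2.6} applies verbatim. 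Everything else is direct substitution, and the well-definedness of $\underline{[X,Y]}=T\pi^1_0([X,Y])$ raises no issue since $[X,Y]$ is a genuine phase vector field.
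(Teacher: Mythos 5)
Your proposal is correct, and its closing paragraph is precisely the paper's own proof: the author applies Lemma \ref{Lm: 2.5} (in its contact specialization, Corollary \ref{Cr: 2.6}) to write $[X,Y]=d\{f,g\}\Sha+\{f,g\}\,\wha\gamma$ with $f=\wha\tau(\underline{X})$, $g=\wha\tau(\underline{Y})$, and then invokes Corollary \ref{Cr: 5.2} to identify $\{f,g\}=\wha\tau(\underline{[X,Y]})$ and to get conservation. Your opening route --- closure of contact symmetries under the Lie bracket plus the normal form \eqref{Eq: 4.1} applied to $[X,Y]$ --- is a valid and slightly more abstract shortcut that the paper does not spell out, but it yields the same conclusion, so the two arguments are essentially equivalent.
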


\begin{proof}
From Lemma \ref{Lm: 2.5} we get
$$
[X,Y] = d\{\wha\tau(X),\wha\tau(Y) \}\Sha + \{\wha\tau(X),\wha\tau(Y) \}\, \wha\gamma
$$
and by Corollary \ref{Cr: 5.2} the Lie bracket $[X,Y]$ is given by
\eqref{Eq: 5.3} where the phase function $\wha\tau(\underline{[X,Y]})$  is conserved.
\end{proof}

\begin{rmk}\label{Rm: 5.3}
{\rm
 The above Theorem \ref{Th: 5.3} then means that 
the rule transforming generalized vector fields satisfying the projectability and the conservation  conditions into hidden symmetries of the gravitational contact phase structure is a Lie algebra homomorphism with respect to the Lie bracket of generalized vector fields and the Lie bracket of hidden symmetries.
}
\hfill\END
\end{rmk} 
 
\subsection{Brackets of hidden symmetries generating by Killing multivectors}

Now, let us consider generalized vector fields generated by symmetric and Killing multivector fields, i.e. for a symmetric $k$-vector field $\overset{k}{K}$ we assume the generalized vector field $ \underline{X}[\overset{k}{K}] $ given by \eqref{Eq: 4.13} and the corresponding phase function $\overset{k}{K}(\wha\tau)$.

\begin{lem}\label{Lm: 5.4}
For a symmetric $k$-vector field $\overset{k}{K}$ and a symmetric $l$-vector field $\overset{l}{L}$ we have
\begin{align}
\wha\tau(\underline{[X[\overset{k}{K}],X[\overset{l}{L}]]})
& =
[\overset{k}{K},\overset{l}{L}](\wha\tau) 
+ \tfrac{l-1}{2} \overset{l}{L}(\wha\tau)\, [\overset{k}{K},\wha{\bar G}](\wha\tau)
- \tfrac{k-1}{2} \overset{k}{K}(\wha\tau)\, [\overset{l}{L},\wha{\bar G}](\wha\tau)
\\
& = \nonumber
[\overset{k}{K},\overset{l}{L}](\wha\tau) 
+ {(l-1)} \overset{l}{L}(\wha\tau)\, \wha\gamma.\overset{k}{K}(\wha\tau)
- {(k-1)} \overset{k}{K}(\wha\tau)\, \wha\gamma. \overset{l}{L}(\wha\tau)
\,.
\end{align}
\end{lem}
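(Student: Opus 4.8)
The plan is to reduce the statement to an application of Lemma \ref{Lm: 5.1}. The two generalized vector fields $\underline{X}[\overset{k}{K}]$ and $\underline{X}[\overset{l}{L}]$ defined by \eqref{Eq: 4.13} satisfy the projectability condition, and, as noted just below \eqref{Eq: 4.13}, one has $\wha\tau(\underline{X}[\overset{k}{K}]) = \overset{k}{K}(\wha\tau)$ and $\wha\tau(\underline{X}[\overset{l}{L}]) = \overset{l}{L}(\wha\tau)$. Writing $A = \overset{k}{K}(\wha\tau)$, $B = \overset{l}{L}(\wha\tau)$ and $a = \wha\gamma.A$, $b = \wha\gamma.B$, Lemma \ref{Lm: 5.1} applied to $X = X[\overset{k}{K}]$ and $Y = X[\overset{l}{L}]$ gives immediately
$$
\wha\tau(\underline{[X[\overset{k}{K}],X[\overset{l}{L}]]}) = \{A,B\} + A\,b - B\,a\,.
$$
Thus everything is reduced to expressing the phase Poisson bracket $\{A,B\}$ in terms of the Schouten bracket.

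The heart of the proof is therefore the identity
$$
\{\overset{k}{K}(\wha\tau),\overset{l}{L}(\wha\tau)\} = [\overset{k}{K},\overset{l}{L}](\wha\tau) + l\,\overset{l}{L}(\wha\tau)\,\wha\gamma.\overset{k}{K}(\wha\tau) - k\,\overset{k}{K}(\wha\tau)\,\wha\gamma.\overset{l}{L}(\wha\tau)\,.
$$
To prove it I would insert the explicit form \eqref{Eq: 4.10}, namely $\overset{k}{K}(\wha\tau) = (-1)^k(c_0\alp^0)^k\,\breve G^0_{0\underline{\rho}_{k}}\,\overset{k}{K}{}^{\underline{\rho}_{k}}$, into the coordinate expression of $\{f,g\}$ recorded in the proof of Lemma \ref{Lm: 5.1}, and then organise the resulting terms according to whether the derivatives $\der_\lambda$ and $\der^0_j$ act on the tensor coefficients $\overset{k}{K}{}^{\cdots}$, $\overset{l}{L}{}^{\cdots}$ or on the $\wha\tau$-factors (equivalently on $\breve G^0_{0\rho}$ and $\alp^0$). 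The terms in which the spacetime derivatives hit the coefficients reassemble, after comparison with the coordinate formula \eqref{Eq: 2.1}, into $[\overset{k}{K},\overset{l}{L}](\wha\tau)$; this is where the antisymmetrised pieces $k\,\overset{k}{K}{}^{\rho\cdots}\der_\rho\overset{l}{L}{}^{\cdots} - l\,\overset{l}{L}{}^{\rho\cdots}\der_\rho\overset{k}{K}{}^{\cdots}$ of the Schouten bracket appear. The remaining terms, where the derivatives strike one of the $k$ (resp. $l$) factors $\wha\tau_\rho = -c_0\alp^0\breve G^0_{0\rho}$, collect---after using $\breve\delta^\rho_0 = g^{\rho\sigma}\breve g_{0\sigma}$ and $\der_\rho\wha g_{00} = -\breve g_{0\sigma_1}\breve g_{0\sigma_2}\der_\rho g^{\sigma_1\sigma_2}$ exactly as in the derivation of \eqref{Eq: a}---into the two correction terms $l\,B\,a - k\,A\,b$, the combinatorial factors $k$ and $l$ arising from the product rule over the $k$ and $l$ copies of $\wha\tau$.

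Substituting this identity into the output of Lemma \ref{Lm: 5.1} gives
$$
\wha\tau(\underline{[X[\overset{k}{K}],X[\overset{l}{L}]]}) = [\overset{k}{K},\overset{l}{L}](\wha\tau) + (l-1)\,B\,a - (k-1)\,A\,b\,,
$$
since the extra terms $A\,b - B\,a$ shift the coefficients $l,k$ to $l-1,k-1$; this is precisely the second displayed form of the Lemma. The first form then follows at once by inserting $\wha\gamma.\overset{k}{K}(\wha\tau) = \tfrac12[\overset{k}{K},\wha{\bar G}](\wha\tau)$ from \eqref{Eq: a} (and the analogous identity for $\overset{l}{L}$). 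I expect the main obstacle to be the bookkeeping in the key Poisson-bracket identity: isolating the correction terms requires carefully tracking the derivatives falling on the $\alp^0$ and $\breve G^0_{0\rho}$ factors---including the contribution of the second, $\breve G^{i\rho}_0\breve G^{j\sigma}_0$-term of the bracket formula---and matching them with the Reeb derivatives $\wha\gamma.\overset{k}{K}(\wha\tau)$, whereas the identification of the Schouten part is a routine comparison with \eqref{Eq: 2.1}.
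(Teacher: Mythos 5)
Your proposal is correct in outline and, as far as one can tell, differs from the paper mainly in organization: the paper's entire proof of Lemma \ref{Lm: 5.4} is the single sentence ``It can be proved in coordinates,'' i.e.\ a direct coordinate computation of $\wha\tau(\underline{[X[\overset{k}{K}],X[\overset{l}{L}]]})$, whereas you factor the computation through Lemma \ref{Lm: 5.1} and isolate the genuinely new content as the single identity $\{\overset{k}{K}(\wha\tau),\overset{l}{L}(\wha\tau)\} = [\overset{k}{K},\overset{l}{L}](\wha\tau) + l\,\overset{l}{L}(\wha\tau)\,\wha\gamma.\overset{k}{K}(\wha\tau) - k\,\overset{k}{K}(\wha\tau)\,\wha\gamma.\overset{l}{L}(\wha\tau)$. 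That identity is exactly the content of Corollary \ref{Cr: 5.5}, which the paper deduces \emph{from} Lemmas \ref{Lm: 5.1} and \ref{Lm: 5.4}; you run the implication in the opposite direction, so you must (and do) propose to prove it directly in coordinates rather than cite the Corollary, which keeps the argument non-circular. Your arithmetic checks out: combining the identity with Lemma \ref{Lm: 5.1} shifts the coefficients $l,k$ to $l-1,k-1$ as claimed, the passage between the two displayed forms via $\wha\gamma.\overset{k}{K}(\wha\tau)=\tfrac12[\overset{k}{K},\wha{\bar G}](\wha\tau)$ from \eqref{Eq: a} is right, and the special cases $\overset{l}{L}=\wha{\bar G}$ and $k=l=1$ are consistent. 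What your route buys is that the general term-sorting of Lemma \ref{Lm: 5.1} is reused rather than redone, and the residual computation is a purely fibrewise-polynomial identity on $\M J_1\f E$ closely parallel to the derivation of \eqref{Eq: a}; what it leaves open is the same thing the paper leaves open, namely the actual bookkeeping of the derivatives falling on the $\alp^0$ and $\breve G^0_{0\rho}$ factors, which you sketch plausibly but do not carry out.
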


\begin{proof}
It can be proved in coordinates.
\end{proof}

\begin{crl}\label{Cr: 5.5}
For a symmetric $k$-vector field $\overset{k}{K}$ and a symmetric $l$-vector field $\overset{l}{L}$ we have
$$
\{\overset{k}{K}(\wha\tau),\overset{l}{L}(\wha\tau)\}
 + k\, \overset{k}{K}(\wha\tau)\, \wha\gamma.\overset{l}{L}(\wha\tau) - l\, \overset{l}{L}(\wha\tau)\, \wha\gamma.\overset{k}{K}(\wha\tau)
=[\overset{k}{K},\overset{l}{L}](\wha\tau)
$$
\end{crl}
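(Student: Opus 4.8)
The plan is to read off the identity by specializing Lemma \ref{Lm: 5.1} to the generalized vector fields built from the two multivector fields, and then substituting Lemma \ref{Lm: 5.4} for the bracket term that results. Concretely, I would take $\underline{X} = \underline{X}[\overset{k}{K}]$ and $\underline{Y} = \underline{X}[\overset{l}{L}]$, both given by \eqref{Eq: 4.13}. These satisfy the projectability condition (verified right after \eqref{Eq: 4.13}, where it is observed that $\wha\tau(\underline{X}[\overset{k}{K}]) = \overset{k}{K}(\wha\tau)$), so Lemma \ref{Lm: 5.1} applies and, after replacing $\wha\tau(\underline{X}[\overset{k}{K}])$ by $\overset{k}{K}(\wha\tau)$ and $\wha\tau(\underline{X}[\overset{l}{L}])$ by $\overset{l}{L}(\wha\tau)$, yields
$$
\{\overset{k}{K}(\wha\tau),\overset{l}{L}(\wha\tau)\} + \overset{k}{K}(\wha\tau)\, \wha\gamma.\overset{l}{L}(\wha\tau) - \overset{l}{L}(\wha\tau)\, \wha\gamma.\overset{k}{K}(\wha\tau) = \wha\tau(\underline{[X[\overset{k}{K}],X[\overset{l}{L}]]})\,.
$$

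Next I would rewrite the right-hand side using Lemma \ref{Lm: 5.4}, which expresses $\wha\tau(\underline{[X[\overset{k}{K}],X[\overset{l}{L}]]})$ as $[\overset{k}{K},\overset{l}{L}](\wha\tau) + (l-1)\,\overset{l}{L}(\wha\tau)\,\wha\gamma.\overset{k}{K}(\wha\tau) - (k-1)\,\overset{k}{K}(\wha\tau)\,\wha\gamma.\overset{l}{L}(\wha\tau)$. Substituting this and collecting the two mixed terms $\overset{k}{K}(\wha\tau)\,\wha\gamma.\overset{l}{L}(\wha\tau)$ and $\overset{l}{L}(\wha\tau)\,\wha\gamma.\overset{k}{K}(\wha\tau)$ onto the left, their coefficients combine as $1+(k-1)=k$ and $1+(l-1)=l$ respectively, leaving exactly
$$
\{\overset{k}{K}(\wha\tau),\overset{l}{L}(\wha\tau)\} + k\, \overset{k}{K}(\wha\tau)\, \wha\gamma.\overset{l}{L}(\wha\tau) - l\, \overset{l}{L}(\wha\tau)\, \wha\gamma.\overset{k}{K}(\wha\tau) = [\overset{k}{K},\overset{l}{L}](\wha\tau)\,,
$$
which is the asserted formula.

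Since this is purely a matter of combining two already-established lemmas with an elementary rearrangement of scalar multiples, I do not anticipate any genuine obstacle. The only point requiring care is the bookkeeping of the coefficients $k,l$ against $k-1,l-1$, which is precisely where the unit shifts coming from Lemma \ref{Lm: 5.4} get absorbed into the $\wha\gamma$-action terms. Note also that no use is made of the conservation condition, so the identity holds for arbitrary symmetric multivector fields; it specializes to Corollary \ref{Cr: 5.2} exactly when both $\overset{k}{K}$ and $\overset{l}{L}$ are Killing, since then $\wha\gamma.\overset{k}{K}(\wha\tau) = \wha\gamma.\overset{l}{L}(\wha\tau) = 0$.
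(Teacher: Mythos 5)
Your proposal is correct and follows exactly the paper's own route: the paper's proof is the one-line "It follows from Lemma \ref{Lm: 5.1} and Lemma \ref{Lm: 5.4}," and you have simply spelled out the substitution and the coefficient bookkeeping $1+(k-1)=k$, $1+(l-1)=l$, which checks out. Your closing observations (no use of the conservation condition; specialization to the Killing case) are consistent with how the paper uses the result in Remark \ref{Rm: 5.4}.
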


\begin{proof}
It follows from Lemma \ref{Lm: 5.1} and Lemma \ref{Lm: 5.4}.
\end{proof}

\begin{rmk}
{\rm
Let us consider phase functions generated by symmetric multivector fields. Then we can define the bracket
\begin{equation}\label{Eq: 5.5}
[\overset{k}{K}(\wha\tau),\overset{l}{L}(\wha\tau)]_{k,l}
=
\{\overset{k}{K}(\wha\tau),\overset{l}{L}(\wha\tau)\}
 + k\, \overset{k}{K}(\wha\tau)\, \wha\gamma.\overset{l}{L}(\wha\tau) - l\, \overset{l}{L}(\wha\tau)\, \wha\gamma.\overset{k}{K}(\wha\tau)\,.
\end{equation}
By Corollary \ref{Cr: 5.5} it follows that the sheaf of such phase function is closed with respect to the above bracket. Moreover, this bracket is a Lie bracket, really it is antisymmetric and we have
$$
\sum_{cyclic} \big[[\overset{k}{K}(\wha\tau),\overset{l}{L}(\wha\tau) ]_{k,l}, \overset{m}{M}(\wha\tau)\big]_{k+l-1,m} =\sum_{cyclic} \big[[\overset{k}{K}(\wha\tau),\overset{l}{L}(\wha\tau) ], \overset{m}{M}(\wha\tau)\big] = 0\,,
$$
i.e. the Jacobi identity is satisfied. Here the bracket $[\overset{k}{K}(\wha\tau),\overset{l}{L}(\wha\tau) ] $ is the standard Jacobi bracket given by the gravitational Jacobi pair.
Then Corollary \ref{Cr: 5.5} means that the rule transforming symmetric multivector fields into phase functions is a Lie algebra homomorphism with respect to the Schouten bracket of symmetric multivector fields and the bracket given by \eqref{Eq: 5.5}. 
}
\hfill\END
\end{rmk}

In what follows we shall assume Killing multivector fields and the corresponding phase functions.

\begin{lem}\label{Lm: 5.5}
For a Killing $k$-vector field $\overset{k}{K}$ and a Killing $l$-vector field $\overset{l}{L}$ we have
\begin{align*}
\wha\tau(\underline{[X[\overset{k}{K}],X[\overset{l}{L}]]})
& =
[\overset{k}{K},\overset{l}{L}](\wha\tau)
\,.
\end{align*}
\end{lem}

\begin{proof}
It follows from Lemma  \ref{Lm: 5.4} and the fact that for Killing multivector fields the functions $\overset{k}{K}(\wha\tau) $ and  $\overset{l}{L}(\wha\tau) $ are conserved. 
\end{proof}

\begin{rmk}\label{Rm: 5.4}
{\rm
From Corollary \ref{Cr: 5.5} we obtain for Killing multivector fields
$$
\{\overset{k}{K}(\wha\tau),\overset{l}{L}(\wha\tau)\} 
= [\overset{k}{K},\overset{l}{L}](\wha\tau)
$$
and the rule transforming a Killing multivector field $\overset{k}{K}$ into the conserved phase function
$\overset{k}{K}(\wha\tau)$ is a Lie algebra homomorphism with respect to the Schouten bracket of Killing multivector fields and the Poisson bracket of conserved phase functions.
}
\hfill\END
\end{rmk}

\begin{thm}\label{Th: 5.5}
Let $ \overset{k}{K} $ be a Killing $ k $-vector field and $ \overset{l}{L} $ be a Killing $ l $-vector field, then
\begin{equation}\label{Eq: 5.4}
\big[X[\overset{k}{K}],X[\overset{l}{L}]\big] = X\big[[\overset{k}{K},\overset{l}{L}]\big]\,.
\end{equation}
i.e. the rule transforming Killing multivector fields into hidden symmetries of the gravitational contact phase structure is a Lie algebra homomorphism with respect to the Schouten bracket of Killing multivector fields and the Lie bracket of phase vector fields.
\end{thm}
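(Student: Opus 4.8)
The plan is to assemble \eqref{Eq: 5.4} from the structural Theorem \ref{Th: 5.3} together with the identification of generating functions in Lemma \ref{Lm: 5.5}, using the closure of Killing multivector fields under the Schouten bracket. The essential observation is that \emph{both} sides of \eqref{Eq: 5.4} are Hamilton--Jacobi lifts of the canonical form \eqref{Eq: 4.1}, so it suffices to show that their generating phase functions coincide.

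First I would note that, since $\overset{k}{K}$ and $\overset{l}{L}$ are Killing, Corollary \ref{Th: 4.6} guarantees that $X[\overset{k}{K}]$ and $X[\overset{l}{L}]$ are genuine infinitesimal symmetries of the gravitational contact phase structure, of the form \eqref{Eq: 4.1} with conserved generating functions $\overset{k}{K}(\wha\tau)$ and $\overset{l}{L}(\wha\tau)$. I would then apply Theorem \ref{Th: 5.3} to $X = X[\overset{k}{K}]$ and $Y = X[\overset{l}{L}]$, which expresses the Lie bracket as
\[
\big[X[\overset{k}{K}],X[\overset{l}{L}]\big]
= d\big(\wha\tau(\underline{[X[\overset{k}{K}],X[\overset{l}{L}]]})\big)\Sha
+ \wha\tau(\underline{[X[\overset{k}{K}],X[\overset{l}{L}]]})\,\wha\gamma ,
\]
with conserved coefficient. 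Lemma \ref{Lm: 5.5} then identifies this generating function, giving $\wha\tau(\underline{[X[\overset{k}{K}],X[\overset{l}{L}]]}) = [\overset{k}{K},\overset{l}{L}](\wha\tau)$.

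It remains to recognize the right-hand side of \eqref{Eq: 5.4}. By the Jacobi identity applied to the Killing equation \eqref{Eq: 2.2}, the Schouten bracket $[\overset{k}{K},\overset{l}{L}]$ is again a Killing $(k+l-1)$-vector field, so the notation $\overset{k}{K}(\wha\tau)$ of \eqref{Eq: 4.10} applies to it and its associated phase function is exactly $[\overset{k}{K},\overset{l}{L}](\wha\tau)$. By the definition \eqref{Eq: 4.11}, $X[[\overset{k}{K},\overset{l}{L}]]$ is precisely the Hamilton--Jacobi lift of this phase function, namely $X[[\overset{k}{K},\overset{l}{L}]] = d([\overset{k}{K},\overset{l}{L}](\wha\tau))\Sha + [\overset{k}{K},\overset{l}{L}](\wha\tau)\,\wha\gamma$. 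Comparing this with the displayed formula above yields \eqref{Eq: 5.4}, and the homomorphism interpretation follows at once.

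The computational heart of the argument is entirely contained in Lemma \ref{Lm: 5.5} (itself a consequence of the coordinate identity of Lemma \ref{Lm: 5.4} together with the vanishing of $\wha\gamma.\overset{k}{K}(\wha\tau)$ for Killing fields), so once that is in hand the present proof is a short assembly. The only point demanding a moment's care is that the two Hamilton--Jacobi lifts being compared are genuinely in the canonical form \eqref{Eq: 4.1}: this requires that the generating generalized vector field $\underline{X}[[\overset{k}{K},\overset{l}{L}]]$ satisfy the projectability condition, which is automatic because $\wha\tau(\underline{X}[\overset{k}{K}]) = \overset{k}{K}(\wha\tau)$ for every symmetric multivector field, as observed right after \eqref{Eq: 4.13}. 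Beyond this bookkeeping I do not expect any genuine obstacle.
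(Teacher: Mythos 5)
Your proposal is correct and follows essentially the same route as the paper: both arguments reduce the claim to comparing the generating phase functions of two Hamilton--Jacobi lifts, and both rest on the same key identity $\{\overset{k}{K}(\wha\tau),\overset{l}{L}(\wha\tau)\}=[\overset{k}{K},\overset{l}{L}](\wha\tau)=\wha\tau(\underline{[X[\overset{k}{K}],X[\overset{l}{L}]]})$ coming from Lemma \ref{Lm: 5.4} and the Killing condition. The paper merely cites Lemma \ref{Lm: 2.5} and Remark \ref{Rm: 5.4} where you cite Theorem \ref{Th: 5.3} and Lemma \ref{Lm: 5.5}, but these are the same ingredients repackaged, so the proofs coincide in substance.
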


\begin{proof}
From Lemma \ref{Lm: 2.5} and Remark \ref{Rm: 5.4} we have
\begin{align*}
\big[X[\overset{k}{K}],X[\overset{l}{L}]\big] 
& = d\big(\{\overset{k}{K}(\wha\tau),\overset{l}{L}(\wha\tau) \}\big)\Sha +  \{\overset{k}{K}(\wha\tau),\overset{l}{L}(\wha\tau) \}\, \wha\gamma
\\
& =
d\big([\overset{k}{K},\overset{l}{L}](\wha\tau)\big)\Sha +  [\overset{k}{K},\overset{l}{L}](\wha\tau)\, \wha\gamma
 = 
X\big[[\overset{k}{K},\overset{l}{L}]\big]\,.
\end{align*}
\vskip-1.5\baselineskip
\end{proof}

\end{document}